\tikzset{
    arc arrow/.style args={%
    to pos #1 with length #2}{
    decoration={
        markings,
         mark=at position 0 with {\pgfextra{%
         \pgfmathsetmacro{\tmpArrowTime}{#2/(\pgfdecoratedpathlength)}
         \xdef\tmpArrowTime{\tmpArrowTime}}},
        mark=at position {#1-\tmpArrowTime} with {\coordinate(@1);},
        mark=at position {#1-2*\tmpArrowTime/3} with {\coordinate(@2);},
        mark=at position {#1-\tmpArrowTime/3} with {\coordinate(@3);},
        mark=at position {#1} with {\coordinate(@4);
        \draw[-{Triangle[length=#2,bend]}]       
        (@1) .. controls (@2) and (@3) .. (@4);},
        },
     postaction=decorate,
     },
fermion arc arrow/.style={arc arrow=to pos #1 with length 2.5mm},
Vertex/.style={fill,circle,inner sep=1.5pt},
insert vertex/.style={decoration={
        markings,
         mark=at position #1 with {\node[Vertex]{};},
        },
     postaction=decorate}     
}
\tikzset{snake it/.style={decorate,decoration={zigzag, segment length=5, amplitude=1.5}}}
\definecolor{darkred}{rgb}{0.8,0.1,0.1}
\theoremstyle{plain}
\newtheorem{theo}{Theorem}[section]
\newtheorem{propo}[theo]{Proposition}
\theoremstyle{definition}
\newtheorem{defi}[theo]{Definition}
\newenvironment{ex}
  {\pushQED{\qed}\exx}
  {\popQED\endexx}
\newenvironment{rem}
  {\pushQED{\qed}\remm}
  {\popQED\endremm}
\numberwithin{equation}{section}
\def\nn{\nonumber}
\def\bbR{\mathbb{R}}
\def\bbC{\mathbb{C}}
\def\bbZ{\mathbb{Z}}
\def\ii{{\,{\rm i}\,}}
\def\dR{\mathrm{dR}}
\def\End{\mathrm{End}}
\def\Aut{\mathrm{Aut}}
\def\Sym{\mathrm{Sym}}
\def\Obs{\mathrm{Obs}}
\def\cl{\mathrm{cl}}
\def\qu{\mathrm{qu}}
\def\fr{\mathrm{free}}
\def\intt{\mathrm{int}}
\def\BV{\mathrm{BV}}
\def\id{\mathrm{id}}
\def\dd{\mathrm{d}}
\def\oone{\mathbbm{1}}
\def\Tr{\mathrm{Tr}}
\def\Mat{\mathrm{Mat}}
\def\AA{\mathcal{A}}
\def\HH{\mathcal{H}}
\def\DD{\mathcal{D}}
\def\GG{\mathcal{G}}
\def\LL{\mathcal{L}}
\def\su{\mathfrak{su}}
\def\g{\mathfrak{g}}
\def\sk{\vspace{1mm}}
\newcommand{\pair}[2]{\langle #1 , #2\rangle}
\newcommand{\cyc}[2]{\langle\!\langle #1 , #2\rangle\!\rangle}
\let\@fnsymbol\@alph
\title{
BV quantization of dynamical fuzzy spectral triples
}
\author{
James Gaunt$^{1,2,a}$, Hans Nguyen$^{1,b}$\ and\ Alexander Schenkel$^{1,c}$\vspace{4mm}\\
{\small ${}^1$ School of Mathematical Sciences, University of Nottingham,}\\
{\small University Park, Nottingham NG7 2RD, UK.}\vspace{2mm}\\
{\small ${}^2$ Department of Mathematics, Heriot-Watt University,}\\
{\small Colin Maclaurin Building, Riccarton, Edinburgh EH14 4AS, UK.}\vspace{4mm}\\
{\small \begin{tabular}{ll}
Email: &  ${}^a$~\texttt{j.gaunt@hw.ac.uk}\\
& ${}^b$~\texttt{hans.nguyen@nottingham.ac.uk}\\
& ${}^c$~\texttt{alexander.schenkel@nottingham.ac.uk}\vspace{2mm}
\end{tabular}
}
}
\date{September 2022}
\begin{document}

\maketitle


\begin{abstract}
\noindent This paper provides a systematic study of gauge symmetries in the dynamical fuzzy spectral triple models for quantum gravity that have been proposed by Barrett and collaborators. We develop both the classical and the perturbative quantum BV formalism for these models, which in particular leads to an explicit homological construction of the perturbative quantum correlation functions. We show that the relevance of ghost and antifield contributions to such correlation functions depends strongly on the background Dirac operator $D_0$ around which one perturbs, and in particular on the amount of gauge symmetry that it breaks. This will be illustrated by studying quantum perturbations around 1.) the gauge-invariant zero Dirac operator $D_0=0$ in a general $(p,q)$-model, and 2.) a simple example of a non-trivial $D_0$ in the quartic $(0,1)$-model.
\end{abstract}


\renewcommand{\baselinestretch}{0.8}\normalsize
\tableofcontents
\renewcommand{\baselinestretch}{1.0}\normalsize



\section{Introduction and summary}
Noncommutative geometry provides a powerful and versatile mathematical framework
that allows one to incorporate quantum effects into the small-scale structure
of spacetime. Throughout the past two decades, we have witnessed the development
of different, but related, approaches to the problem of describing
generalizations of Riemannian geometry, and hence of gravity, to noncommutative spaces. 
The main difference in these approaches lies in how they attempt to 
encode the geometry/gravitational field on a noncommutative space. 
The most common options are metric approaches
\cite{AschieriMetric,BeggsMajid}, vielbein approaches \cite{AschieriCastellani}
and more radical approaches such as Connes' spectral triples \cite{Connes}.
\sk

This paper is a contribution to the spectral triple approach to noncommutative (quantum) gravity.
More precisely, we shall work within the framework of {\em fuzzy spectral triples} developed by
Barrett in \cite{Barrett1}, which can be thought of as finite dimensional approximations of Euclidean spacetime. 
Very informally speaking, a spectral triple $(\AA,\HH,D)$ consists of a possibly noncommutative
algebra $\AA$ that is represented on a Hilbert space $\HH$, together with a Dirac operator $D$ on $\HH$.
(Further data is required for a real spectral triple,  namely a chirality operator $\Gamma$ and a real structure $J$, 
see Section \ref{sec:spectral} for more details.) The physical interpretation
is that $\AA$ is the ``algebra of functions'' on a noncommutative space, $\HH$ is the space of spinor fields
and the Dirac operator $D$ encodes the noncommutative Riemannian geometry. Fuzzy spectral triples
are particular examples of real spectral triples with $\AA = \Mat_{N}^{}(\bbC)$ a finite-dimensional
matrix algebra and $\HH = \AA\otimes V$ determined by a module $V$ over some Clifford algebra $\mathsf{Cl}_{p,q}$. 
See \cite{Barrett1} for the description
of the fuzzy sphere in this framework and \cite{BarrettGaunt} for the fuzzy tori.
\sk

Fuzzy spectral triples can be used to develop and study (toy-)models for
quantum gravity theories on noncommutative spaces \cite{Barrett3}. The basic idea is to consider
the Dirac operator $D$ as a dynamical variable that gets quantized
through performing path-integrals over the space $\DD$ of Dirac operators.
More precisely, given a suitable action $S: \DD\to \bbR$ on the space of
Dirac operators, one defines the partition function of such quantum gravity 
model as the integral
\begin{subequations}\label{eqn:pathintegral}
\begin{flalign}
Z\,:=\,\int_{\DD}e^{-S(D)}\,\dd D
\end{flalign}
and the expectation value of an observable $\mathcal{O}:\DD\to\bbC$ by
\begin{flalign}
\langle \mathcal{O}\rangle\,:=\frac{1}{Z}\,\int_{\DD} \mathcal{O}(D)\,e^{-S(D)}\,\dd D\quad.
\end{flalign}
\end{subequations}
It is important to stress that, for fuzzy spectral triples, the space of Dirac operators 
$\DD$ is finite-dimensional. Hence, such integrals exist rigorously, subject to suitable
conditions on the action $S$. Furthermore, using the classification of finite spectral triples 
\cite{Krajewski}, together with the explicit characterization of Dirac 
operators in terms of (anti-)Hermitian matrices \cite{Barrett1},
the path-integrals in \eqref{eqn:pathintegral} may be reformulated
in terms random multi-matrix models. The original study in \cite{Barrett3}
was through numerical Monte Carlo simulations, however some later works made 
considerable progress in analyzing such models 
with a variety of analytical methods, see e.g.\ 
\cite{Barrett2,Khalkhali1,Khalkhali2,Khalkhali3,Khalkhali4,Perez-Sanchez1,Perez-Sanchez2,Perez-Sanchez3}
and the review \cite{KhalkhaliReview}.
\sk

The present paper is about a gauge-theoretic study of the dynamical fuzzy spectral triple
models that we have outlined above. Note that such models carry an obvious 
notion of gauge symmetry, described by certain unitary operators $U$ on the Hilbert space $\HH$,
which act via the adjoint action $D\mapsto U \,D\, U^{\ast}$ on the space of Dirac operators $\DD$.
These gauge transformations can be interpreted as noncommutative analogs of the diffeomorphism gauge 
symmetries in ordinary gravity.
Taking into account such gauge symmetries in the definition of the path-integrals 
\eqref{eqn:pathintegral} is an important and non-trivial task which, to the best of our knowledge, 
has not been studied in the literature yet. The standard approach to define
gauge-theoretic path-integrals is through homological methods, such as the 
Batalin–Vilkovisky (BV) formalism \cite{BV} and its 
modern incarnation developed by Costello
and Gwilliam \cite{CostelloGwilliam,CostelloGwilliam2}. In the context
of noncommutative geometry, the BV quantization of field theories on fuzzy spaces has been studied in
\cite{NSSfuzzy} and the classical BV formalism  for a certain type of matrix model 
arising from spectral triples in \cite{IvS17, Ise19a, Ise19b}.
\sk

The main achievement of the present paper is an explicit and computationally accessible
description of both the classical and the perturbative quantum BV formalism for the 
dynamical fuzzy spectral triple models introduced in \cite{Barrett3}. At the classical level, 
we shall construct in particular the extended BV action (involving ghosts and antifields) 
for such models by employing systematic techniques from derived algebraic 
geometry \cite{BSSderived}. At the quantum level, 
we provide a rigorous construction of the perturbative quantum correlation functions 
(including possible ghost and antifield contributions) for quantum perturbations around any background 
Dirac operator $D_0\in\DD$ that solves the classical equations of motion associated with the action $S$.
We then analyze such quantum correlation functions in more detail in order to understand
if the ordinary path-integrals in \eqref{eqn:pathintegral} receive gauge-theoretic modifications
through the ghosts and antifields. We observe that this depends strongly on
the choice of background solution $D_0\in\DD$ one uses to perturb around, or more precisely
on the amount of gauge symmetry that it breaks. In the special case one perturbs around the zero Dirac operator
$D_0=0$, which is gauge invariant, we prove in Proposition \ref{prop:D0perturbation} 
that the ghosts and antifields for ghosts decouple from the correlation functions, 
hence there are no gauge-theoretic modifications to the ordinary path-integrals in \eqref{eqn:pathintegral}.
In stark contrast,  for the generic case of perturbations around a non-trivial Dirac operator $D_0\neq 0$ 
that breaks some of the gauge symmetries, there are non-trivial gauge-theoretic modifications
to the quantum correlation functions. In Section \ref{sec:examples}, 
we shall illustrate this concretely by studying
a simple example, the so-called quartic $(0,1)$-model from \cite{Barrett3}. We observe that this 
model exhibits a Higgs-like mechanism due to its ``symmetry-breaking potential'',
which is identified as the origin of the non-trivial ghost and antifield contributions.
Hence, our results indicate that taking gauge symmetries properly into account
alters the path-integrals in \eqref{eqn:pathintegral}, in particular
in semi-classical situations where the quantum fluctuations are localized 
around a non-trivial classical solution $D_0\neq 0$. An interesting problem for future 
research would be to understand the physical effects associated with these
gauge-theoretic modifications and their interpretation in the context of 
quantum gravity.
\sk

We would like to state very clearly that our present paper
studies gauge-theoretic aspects of fuzzy spectral triple models 
for a {\em fixed} matrix size $N$. Given the highly interesting behavior 
of such models in the large $N$ limit, see e.g.\ the recent review \cite{KhalkhaliReview} 
for an excellent overview, it is natural to ask whether gauge symmetries and BV quantization could have an impact
on large $N$ phenomena such as phase transitions. While this certainly should be expected,
given the gauge-theoretic modifications to finite $N$ correlation functions that
we find in this paper, making mathematically precise statements about the large 
$N$ limit in the BV formalism has been explored very little. 
In a recent series of papers \cite{LargeN1,LargeN2}, the authors
provide a very interesting homological perspective on the
large $N$ limit of the Gaussian Unitary Ensemble from random matrix theory
by establishing a generalization of the Loday-Quillen-Tsygan theorem
that links ordinary BV quantization (as used in the present paper) 
to a noncommutative/cyclic analog. An interesting problem for future research
would be to adapt these techniques to fuzzy spectral triple models, with the hope to 
obtain a homological perspective on the results by the Western University group \cite{KhalkhaliReview}.
In this way it might be possible to determine the effect of gauge symmetries in 
the large $N$ limit on phenomena such as phase transitions.
\sk

The outline of the remainder of this paper is as follows: In Section \ref{sec:spectral},
we provide a brief review of the framework of fuzzy spectral triples from \cite{Barrett1}, 
their gauge symmetries and also their perturbative treatment. The classical BV formalism
for such models is studied in Section \ref{sec:BV}, culminating in an explicit description
of the relevant antibracket and BV action, see Remark \ref{rem:BVaction}. In Section \ref{sec:quantization},
we describe the BV quantization of these models and provide a homological approach
to compute their quantum correlation functions \eqref{eqn:correlation}.
The special case of perturbations around the trivial Dirac operator $D_0=0$
is investigated in detail in Section \ref{sec:D0=0}. In particular, we 
prove that in this case both the ghosts and the antifields for ghosts decouple from the 
quantum correlation functions for observables for the Dirac operator, 
see Proposition \ref{prop:D0perturbation}. In Section \ref{sec:examples},
we shall explain and explicitly show that such decoupling is {\em not} a generic feature
of dynamical fuzzy spectral triples by studying quantum perturbations around a non-trivial Dirac operator $D_0\neq 0$
in the quartic $(0,1)$-model from \cite{Barrett3}. In particular,
we identify and compute to leading order the ghost and antifield contributions to the 
$1$-point and the $2$-point correlation functions of this model, see
Examples \ref{ex:1pt} and \ref{ex:Dnontriv2pt}.


\section{\label{sec:spectral}Fuzzy spectral triples}
The aim of this section is to briefly recall the concept of
{\em fuzzy spectral triples} from \cite{Barrett1},
which are finite-dimensional variants of Connes' real spectral triples \cite{Connes}.
A real spectral triple $(\AA,\HH,\pi,D, \Gamma, J)$ consists of 
a $\ast$-algebra $\AA$ with a $\ast$-representation $\pi:\AA\to\End(\HH)$ on a Hilbert space $\HH$,
a self-adjoint operator $D :\HH\to \HH$ (called Dirac operator), 
a self-adjoint operator $ \Gamma : \HH \rightarrow \HH $ (called chirality operator)
and an anti-unitary operator $ J : \HH \rightarrow \HH $ (called real structure),
which have to satisfy various axioms and compatibility conditions, see e.g.\ 
\cite{Connes,vanSuijlekom,Barrett1} for details.
Fuzzy spectral triples are a particularly simple
class of finite-dimensional real spectral triples, 
for which these data are very explicit:
For the $\ast$-algebra, we fix a natural number $N\in\bbZ^{>0}$ and
take the $N\times N$-matrices with complex entries
\begin{flalign}
\AA \, := \, \Mat_{N}^{}(\bbC)\quad,
\end{flalign}
where the $\ast$-involution is given by Hermitian conjugation.
For the Hilbert space, we choose two non-negative integers
$p,q\in\bbZ^{\geq 0}$, pick a $(p,q)$-Clifford module $V = \bbC^k$
and define
\begin{subequations}
\begin{flalign}
\HH \,:=\, \AA\otimes V
\end{flalign}
together with the Hermitian inner product
\begin{flalign}
\cyc{a\otimes v}{a^\prime \otimes v^\prime}   \,:=\, \Tr_{\AA}^{}(a^\ast a^\prime)~\pair{v}{v^\prime}\quad,
\end{flalign}
\end{subequations}
where $\Tr_\AA^{}$ denotes the trace on the matrix algebra and $\pair{\,\cdot\,}{ \,\cdot\,}$ 
is the standard inner product on $V=\bbC^k$. The $\ast$-representation 
$\pi: \AA\to\End(\HH)$ of $\AA$ on $\HH$ is given by left matrix multiplication
\begin{flalign}
\pi(a^\prime)\big(a\otimes v\big)\,:=\, (a^\prime a)\otimes v\quad.
\end{flalign}
Using the $\bbC$-linear chirality $\gamma: V\to V$ 
and the $\bbC$-anti-linear real structure $C : V \to V$
of the Clifford module $V$, we further define a chirality operator $\Gamma:\HH\to \HH$
and a real structure $J:\HH\to\HH$ on the Hilbert space $\HH$ by setting
\begin{flalign}
\Gamma(a\otimes v) \, :=\, a\otimes \gamma(v)~~,\quad 
J(a\otimes v)\,:=\, a^\ast\otimes C(v)\quad.
\end{flalign}
These two operators satisfy the properties listed in \cite[Definition 1]{Barrett1},
which depend on the KO-dimension $s = q-p\, \, \mathrm{mod}\,\, 8$.
In particular, it follows that $\HH$ is an $\AA$-bimodule with right action
given by
\begin{flalign}
J\,\pi(a^{\prime})^\ast \,J^{-1}\big(a\otimes v\big) \,=\, (a \, a^\prime)\otimes v\quad.
\end{flalign}
We call the tuple $(\AA,\HH, \pi,\Gamma,J)$ defined above
the {\em $(p,q)$-fermion space} over the $N\times N$-matrix algebra $\AA=\Mat_{N}^{}(\bbC)$.
\begin{defi}\label{def:Diracoperator}
A {\em Dirac operator} on the $(p,q)$-fermion space over $\AA=\Mat_{N}^{}(\bbC)$
is an operator $D\in\End(\HH)$ that satisfies the following properties (with $s = q-p\, \, \mathrm{mod}\,\, 8$
the KO-dimension):
\begin{itemize}
\item[(i)] $D^\ast = D$;
\item[(ii)] $D\,\Gamma = - (-1)^{s}\,\Gamma \,D$;
\item[(iii)] $D\,J = \epsilon^\prime\,J\,D$, where $\epsilon^\prime = 1$ for 
$ s=0,2,3,4,6,7$ and $\epsilon^\prime = -1$ for $s=1,5$;
\item[(iv)] $\big[[D,\pi(a)],J \,\pi(b)\,J^{-1}\big] =0$, for all $a,b\in \AA$.
\end{itemize}
We denote the real vector space of Dirac operators by
\begin{flalign}
\DD \,:=\, \big\{D\in \End(\HH)\,:\, \text{(i)-(iv) are satisfied}\big\}\,\subseteq\,\End(\HH)\quad.
\end{flalign}
\end{defi}

As the next step we discuss the group of automorphisms of the $(p,q)$-fermion space
and its action on the space of Dirac operators. An automorphism
of $(\AA,\HH, \pi,\Gamma,J)$ is a pair $(\varphi,\Phi)$
consisting of a $\ast$-algebra automorphism $\varphi: \AA\to\AA$ and 
a left $\AA$-module automorphism $\Phi : \HH\to\HH$ relative to $\varphi$,
i.e.\ $\Phi(a\,h) = \varphi(a)\,\Phi(h)$ for all $a\in\AA$ and $h\in\HH$.
The latter has to preserve the inner product $\cyc{\Phi(\,\cdot\,)}{\Phi(\,\cdot\,)} = \cyc{\,\cdot\,}{\,\cdot\,}$,
the chirality operator $\Gamma\,\Phi =\Phi\,\Gamma$ and the real structure 
$J\,\Phi = \Phi\,J$. Using that $\HH= \AA\otimes V$ is a free module
and that the center of $\AA =\Mat_{N}^{}(\bbC)$ consists of complex multiples of the unit $\oone$,
one easily checks that $\Phi$ must be of the form
\begin{flalign}
\Phi(a\otimes v) \,=\, \varphi(a)\otimes T(v)\quad,
\end{flalign}
for all $a\otimes v\in \HH$, where $T\in \mathrm{Aut}(V)$ 
is an automorphism of the Clifford module $V$ that has to preserve the inner product
$\pair{T(\,\cdot\,)}{T(\,\cdot\,)} = \pair{\,\cdot\,}{\,\cdot\,}$,
the chirality  $\gamma\,T = T \,\gamma$ and the real structure 
$C\,T = T\,C$. Denoting by $K\subseteq \mathrm{Aut}(V)$ the group of all such automorphisms $T$ 
of $V$, we find that the automorphism group of the $(p,q)$-fermion space is isomorphic 
to the product group $\Aut(\AA)\times K$. Note that the two factors
play different roles: The group $\Aut(\AA)$ acts on the underlying $\ast$-algebra and thus
plays the same role as the diffeomorphism group in commutative differential geometry,
whilst the group $K\subseteq \mathrm{Aut}(V)$ acts only on the Clifford module $V$ and thus may
be interpreted as global, i.e.\ $\AA$-independent, transformations of spinors. 
\begin{defi}\label{def:gaugegroup}
We call $\GG := \Aut(\AA)\times K$ the {\em gauge group} of the $(p,q)$-fermion space over $\AA = \Mat_N^{}(\bbC)$.
This group acts from the left as automorphisms of the $(p,q)$-fermion space 
\begin{subequations}
\begin{flalign}
\rho_{\AA}^{}\,:\,\GG\times \AA ~&\longrightarrow~\AA~,~~(\varphi,T,a)~\longmapsto~\rho_{\AA}^{}(\varphi,T)(a)\,=\,\varphi(a)\quad,\\
\rho_{\HH}^{}\,:\,\GG\times \HH ~&\longrightarrow~\HH~,~~(\varphi,T, a\otimes v)~\longmapsto~\rho_{\HH}^{}(\varphi,T)(a\otimes v)\,=\,\varphi(a)\otimes T(v)\quad.
\end{flalign}
\end{subequations}
The induced left adjoint action on the space of Dirac operators is given by
\begin{flalign}
\rho_{\DD}^{}\,:\,\GG\times \DD ~\longrightarrow \DD~,~~(\varphi,T, D)~\longmapsto~  \rho_{\HH}^{}(\varphi,T)\circ D\circ\rho_{\HH}^{}(\varphi^{-1},T^{-1})\quad,
\end{flalign}
where $\circ$ denotes composition of maps.
\end{defi}
\begin{rem}\label{rem:gaugeLiealgebra}
It is well known that $\Aut(\AA) \cong PU(N) := U(N)/U(1)$ is isomorphic to the projective
unitary group, see e.g.\ \cite[Example 6.3]{vanSuijlekom}. The relevant isomorphism
assigns to an element $[u]\in PU(N)$ the automorphism $\varphi_{[u]}\in \Aut(\AA)$
that is defined by $\varphi_{[u]}(a) = u\,a\,u^\ast$, for all $a\in\AA$. 
One may use this isomorphism in order to write the $\GG$-actions from
Definition \ref{def:gaugegroup} more explicitly as $\rho_{\AA}^{}(\varphi_{[u]},T)(a) = u\,a\,u^\ast$
and $\rho_{\HH}^{}(\varphi_{[u]},T)(a\otimes v) = (u\,a\,u^\ast)\otimes T(v)$.
\sk

From this isomorphic perspective,  it is easy to describe the infinitesimal gauge transformations.
The Lie algebra $\g$ of the gauge group $\GG$ is given by a direct sum
\begin{flalign}
\g \,=\, \mathfrak{su}(N)\oplus \mathfrak{k}\quad,
\end{flalign} 
where $\mathfrak{su}(N)\cong \mathfrak{pu}(N)$ is the Lie algebra of the projective unitary group
and $\mathfrak{k}$ denotes the Lie algebra of $K\subseteq \mathrm{Aut}(V)$.
(Recall that elements $\epsilon \in \mathfrak{su}(N)\subseteq \Mat_N^{}(\bbC)$ are anti-Hermitian and trace-free $N\times N$-matrices.)
The Lie algebra actions induced by the $\GG$-actions from Definition \ref{def:gaugegroup}
then read explicitly as
\begin{subequations}\label{eqn:Lieactions}
\begin{flalign}
\rho_{\AA}^{}(\epsilon\oplus k)(a)\,&=\,[\epsilon,a]_{\AA}^{}\quad ,\\
\rho_{\HH}^{}(\epsilon\oplus k)(a\otimes v)\,&=\,[\epsilon ,a]_{\AA}^{}\otimes v  + a\otimes k(v)\quad,\\
\rho_{\DD}^{}(\epsilon \oplus k)(D)\,&=\,[\rho_{\HH}(\epsilon\oplus k),D]_{\End(\HH)}^{}\quad,
\end{flalign}
\end{subequations}
for all $\epsilon \oplus k\in \g$,
where $[\,\cdot\,,\,\cdot\,]_{\AA}^{}$ denotes the commutator on $\AA$
and $[\,\cdot\,,\,\cdot\,]_{\End(\HH)}^{}$  the commutator on $\End(\HH)$.
\end{rem}

The last ingredient is a choice of action $S : \DD\to\bbR$ on the space of Dirac operators,
which is typically taken to be a {\em spectral action} in the sense of 
\cite{SpectralAction1,SpectralAction2}.
The original choice in \cite{Barrett3}, which was motivated by simplicity,
is given by $S(D) = \Tr_{\End(\HH)}(\tfrac{g_2}{2}\,D^2 + \tfrac{g_4}{4!}\,D^4)$,
where $g_2,g_4\in\bbR$ are constants and  $\Tr_{\End(\HH)}$ is the trace on the 
endomorphisms of the Hilbert space $\HH$. This choice was also used in 
\cite{Barrett2,Khalkhali3}. More general choices
of the form $S(D) = \Tr_{\End(\HH)}(f(D))$, where $f$ is a real-valued polynomial, 
were studied later in \cite{Perez-Sanchez1,Khalkhali2,Khalkhali4}
and even more general multi-trace actions appeared in \cite{Khalkhali1}.
For most parts of our paper, we do not have to make any explicit choice for 
the action and can work with the following general definition.
We denote by $\DD^\vee$ the dual of the real vector space $\DD$ 
of Dirac operators and by $\Sym\,\DD^\vee$ its symmetric algebra. Note that
$\Sym\,\DD^\vee$ is (isomorphic to) the algebra of polynomial functions on $\DD$.
\begin{defi}
An {\em action} $S:\DD\to\bbR$ is a gauge-invariant and real-valued polynomial function 
on the space of Dirac operators, or equivalently a $\GG$-invariant element $S\in \Sym\, \DD^{\vee}$.
\end{defi}

To conclude this section, we shall briefly discuss the perturbative treatment
of dynamical fuzzy spectral triples. Let us fix any action $S\in  \Sym\, \DD^{\vee}$
and an exact solution $D_0\in \DD$ of its Euler-Lagrange equations.
We can then study formal perturbations
\begin{flalign}\label{eqn:perturbativeDirac}
D= D_0 + \lambda\, \tilde{D}
\end{flalign} 
of the Dirac operator, where $\lambda$ is a formal parameter and the
perturbation $\tilde{D}\in\DD$ is an element of the same vector space $\DD$. 
The latter is considered as the dynamical field in the perturbative approach.
The infinitesimal gauge transformations \eqref{eqn:Lieactions} of the perturbation then take the form
\begin{flalign}\label{eqn:perturbativegauge}
\tilde{\rho}_{\DD}^{}(\epsilon\oplus k)(\tilde{D}) \,=\, [\rho_\HH(\epsilon\oplus k),D_0]_{\End(\HH)}^{} + \lambda\,
[\rho_\HH(\epsilon\oplus k),\tilde{D}]_{\End(\HH)}^{}\quad,
\end{flalign}
i.e.\ they act through a combination of
a linear transformation $[\rho_\HH(\epsilon\oplus k),\tilde{D}]_{\End(\HH)}^{}$ and an inhomogeneous 
one $ [\rho_\HH(\epsilon\oplus k),D_0]_{\End(\HH)}^{}$ that depends on the background solution $D_0$.
(Note that this is of the same form as in Yang-Mills theory,
where infinitesimal gauge transformations $\chi\in C^\infty(M,\g)$ act on connection one-forms
$A\in \Omega^1(M,\g)$ according to $\delta_\chi(A) = \dd \chi +[\chi,A]$, with $\dd$ the de Rham differential.)
The induced action for the perturbation $\tilde{D}$ is then defined as
\begin{flalign}\label{eqn:perturbativeaction}
\tilde{S}(\tilde{D}) \,:=\,\frac{1}{\lambda^2} \big(S(D_0 + \lambda \,\tilde{D}) - S(D_0)\big)\quad,
\end{flalign}
where the subtraction of the constant term $S(D_0)$ is convenient
as it implies that $\tilde{S}$ is a sum of monomials of degree $\geq 2$.
(The degree $1$ monomial vanishes because $D_0$ is a solution of the Euler-Lagrange equations
for the original action $S$.) The normalization $\frac{1}{\lambda^2}$ is chosen such that the
quadratic term in the action is of order $\lambda^0$.
By construction, this defines an element $\tilde{S}\in\Sym\,\DD^{\vee}$ that is invariant
under the infinitesimal gauge transformations \eqref{eqn:perturbativegauge}.


\section{\label{sec:BV}Classical BV formalism}
The BV formalism is a powerful construction that allows one to
assign a commutative {\em differential graded algebra} (dg-algebra)
of classical observables to every gauge-invariant action. 
Crucially, this dg-algebra comes endowed with a canonical shifted Poisson 
structure, the so-called {\em antibracket}, whose perturbative quantization
captures the quantum correlation functions of the theory.
A modern mathematical perspective on the BV formalism is presented in
the books by Costello and Gwilliam \cite{CostelloGwilliam,CostelloGwilliam2}.
We also refer the reader to \cite{NSSfuzzy} for a specialization
of these techniques to finite-dimensional systems, including fuzzy field theories.
\sk

The goal of this section is to spell out in detail the dg-algebra of classical observables
for the perturbative dynamical fuzzy spectral triple model from Section \ref{sec:spectral}.
This goal can be easily achieved by specializing the general construction in \cite{BSSderived} 
to the following input data:
\begin{itemize}
\item The space of fields is the vector space $\DD$ of perturbations $\tilde{D}$ 
of a background Dirac operator $D_0$ (see \eqref{eqn:perturbativeDirac}) that is an exact solution to
the Euler-Lagrange equations of an action $S$.

\item  The infinitesimal gauge symmetries are given by the Lie algebra
$\g = \mathfrak{su}(N)\oplus \mathfrak{k}$ of the gauge group from Definition \ref{def:gaugegroup}. They 
act on the fields according to \eqref{eqn:perturbativegauge}.

\item The perturbative dynamics is determined by the $\g$-invariant 
action $\tilde{S}$ defined in \eqref{eqn:perturbativeaction}.
\end{itemize}

Applying the general result in \cite[Section 7]{BSSderived}, one finds that the 
underlying $\bbZ$-graded commutative algebra of the BV formalism for this model
is given by the graded symmetric algebra
\begin{flalign}\label{eqn:Obscl}
\Obs^{\cl}\,:=\,\Sym(\LL)~~,\quad \LL\,:=\, \g[2]\oplus \DD[1]\oplus \DD^{\vee} \oplus \g^{\vee}[-1]\quad,
\end{flalign}
where $W^{\vee}$ denotes the dual of a vector space $W$ 
and the square brackets indicate shifts of the cohomological degree. 
(Our convention is that, for $W$ a vector space, elements in $W[p]$ are of degree $-p$.)
We can describe this graded commutative algebra more explicitly 
by choosing a dual pair of vector space bases
\begin{flalign}\label{eqn:basis}
\big\{e_a\in \DD\big\}_{a=1}^{\mathrm{dim}\,\DD}~~,\quad
\big\{f^a\in \DD^\vee\big\}_{a=1}^{\mathrm{dim}\,\DD}~~,\quad
\big\{t_i \in \g \big\}_{i=1}^{\mathrm{dim}\,\g}~~,\quad
\big\{\theta^i \in \g^\vee \big\}_{i=1}^{\mathrm{dim}\,\g}~~
\end{flalign}
for $\DD$ and $\DD^\vee$ and for $\g$ and $\g^\vee$. Then $\Obs^{\cl}$
is the graded commutative algebra generated by the generators $t_i$
in degree $-2$, $e_a$ in degree $-1$, $f^a$ in degree $0$ and $\theta^i$
in degree $1$. The physical interpretation of these generators
is as follows: $\theta^i$ are linear observables for the ghost field $\mathsf{c}\in\g$,
$f^a$ are linear observables for the field $\tilde{D}\in \DD$, 
$e_a$ are linear observables for the antifield $\tilde{D}^+\in\DD^\vee$,
and $t_i$ are linear observables for the antifield for the ghost $\mathsf{c}^+\in\g^\vee$.
\sk

In order to describe the differential on \eqref{eqn:Obscl}, 
which is given abstractly as the totalization of an internal differential
and the Chevalley-Eilenberg differential (see \cite[Section 7]{BSSderived}),
it is convenient to work with these generators. 
Let us first recall that, with respect to our choice of bases, 
the Lie bracket on $\g$ and the Lie algebra 
action \eqref{eqn:perturbativegauge} are encoded by 
structure constants that we denote by
\begin{flalign}\label{eqn:structureconstants}
[t_i,t_j]\,=\, \lambda\, f^{k}_{ij}\,t_k~~,\quad
\tilde{\rho}_{\DD}^{}(t_i)(e_a) = \beta_i^b\,e_b + \lambda\, g_{ia}^b\, e_b\quad,
\end{flalign}
where $\beta_i^b$ describes the inhomogeneous term (depending on $D_0$)
in \eqref{eqn:perturbativegauge} and $g_{ia}^b$ the linear term.
Here, and throughout the whole paper, we use the standard summation convention, i.e.\
we suppress summations over repeated indices. The action $\tilde{S}$ given in \eqref{eqn:perturbativeaction}
also admits a basis expansion of the form
\begin{flalign}\label{eqn:actionbasis}
\tilde{S} = \sum_{n\geq 2} \frac{\lambda^{n-2}}{n!}\, S_{a_1\cdots a_n}\, f^{a_1}\cdots f^{a_n}\,\in \,\Sym\,\DD^{\vee}\quad,
\end{flalign}
which starts at $n=2$ because the background Dirac operator $D_0$ is assumed to be an exact solution
of the Euler-Lagrange equations of the given action $S$.
The coefficients $S_{a_1\cdots a_n}$ vanish for $n$ greater than the polynomial degree of $\tilde{S}$.
The differential on \eqref{eqn:Obscl} is then determined by the graded Leibniz rule
and the following action on the generators
\begin{subequations}\label{eqn:totaldifferential}
\begin{flalign}
\dd t_i \,&=\, \beta^a_i\, e_a + \lambda\, g_{ib}^a\, e_a\,f^b - \lambda\,f_{ij}^k\, t_k\,\theta^j\quad,\\
\dd e_a\,&=\, \sum_{n\geq 2} \frac{\lambda^{n-2}}{(n-1)!} S_{a a_2\cdots a_n} f^{a_2}\cdots f^{a_n} - \lambda\, g_{ja}^b\, e_b\,\theta^j\quad,\\
\dd f^a\,&=\,- \beta^a_j\, \theta^j -\lambda\, g_{jb}^a\, f^b\,\theta^j\quad,\\
\dd \theta^i\,&=\,-\frac{\lambda}{2} f^i_{jk}\,\theta^j\,\theta^k\quad.
\end{flalign}
\end{subequations}
Observe that the differential encodes
both the equation of motion and the gauge symmetries, as typical for the BV formalism. 
The square-zero condition $\dd^2=0$ follows from the Lie algebra representation 
identities for the structure constants $f^k_{ij}$, $g_{ia}^b$ and $\beta^a_i$
and from the fact that the action $\tilde{S}$ is gauge-invariant.
\sk

The dg-algebra $\Obs^{\cl}$ comes endowed with a canonical $(-1)$-shifted symplectic
structure, which in our example reads concretely as
\begin{flalign}
\omega\,=\, \dd^{\dR}e_a\wedge \dd^{\dR} f^a - \dd^{\dR} t_i \wedge \dd^{\dR} \theta^i\quad,
\end{flalign}
where $\dd^{\dR}$ denotes the de Rham differential. The antibracket is
the shifted Poisson bracket dual to $\omega$, which is concretely defined by
$\{a,b\}:=\iota_{{}_aH}\iota_{{}_bH}\omega$, for all $a,b\in \Obs^{\cl}$,
where $\iota$ denotes the contraction between vector fields and forms,
and ${}_aH$ is the shifted Hamiltonian vector field defined by $\dd^{\dR} a = \iota_{{}_aH}\omega$.
By construction, the antibracket satisfies the 
graded antisymmetry property
\begin{subequations}\label{eqn:antibracket}
\begin{flalign}
\{a,b\} \,= \,-(-1)^{(\vert a\vert +1 )\,(\vert b\vert +1)}\,\{b,a\}\quad,
\end{flalign}
the graded Jacobi identity
\begin{flalign}
(-1)^{(\vert a\vert +1)\,(\vert c\vert +1)}\,\{a,\{b,c\}\} + 
(-1)^{(\vert b\vert +1)\,(\vert a\vert +1)}\, \{b,\{c,a\}\} + 
(-1)^{(\vert c\vert +1)\,(\vert b\vert +1)}\, \{c,\{a,b\}\} \,=\, 0\quad,
\end{flalign}
the derivation property
\begin{flalign}\label{eqn:antibracketderivation}
\{a, b\, c\}\,=\, \{a,b\} \,c+ (-1)^{(\vert a\vert +1)\,\vert b\vert} \, b\,\{ a,c\}\quad,
\end{flalign}
as well as the compatibility condition 
\begin{flalign}\label{eqn:antibracketcomp}
\dd \{a,b\} \,=\, \{\dd a,b\} +(-1)^{\vert a\vert +1} \{a,\dd b\} 
\end{flalign}
\end{subequations}
with the differential on $\Obs^{\cl}$.
As a consequence of these properties, the antibracket is completely determined
by its value on the generators, which is given by
\begin{flalign}\label{eqn:P0}
\{t_i, \theta^j\}\,=\, \delta_i^j\,=\, - \{\theta^j,t_i\}~~,\quad
\{e_a, f^b\}\,=\, -\delta_a^b\,=\, -\{f^b,e_a\}\quad,
\end{flalign}
and zero otherwise. 
\begin{rem}\label{rem:BVaction}
The differential \eqref{eqn:totaldifferential} on $\Obs^{\cl}$ admits an
equivalent description in terms of a concept that may be more familiar
to some readers, namely that of a {\em BV action}.
The BV extension of our action in \eqref{eqn:actionbasis}
reads explicitly as
\begin{flalign}
S_{\BV} \,=\, \sum_{n \geq 2} \frac{\lambda^{n-2}}{n!}\, S_{a_{1}\cdots a_{n}}\, f^{a_{1}}\cdots f^{a_{n}}
						- \lambda\, g_{i\, b}^{a}\, e_{a}\, f^{b}\, \theta^{i}
						- \frac{\lambda}{2}\, f_{ij}^{k}\, t_{k}\, \theta^{i}\, \theta^{j}
						- \beta^{a}_{i}\, e_{a}\, \theta^{i}\quad.
\end{flalign}
Using the properties \eqref{eqn:antibracket} of the antibracket, it 
is easy to show that the differential \eqref{eqn:totaldifferential}
is given by 
\begin{flalign}
\dd\,=\, \{ S_{\BV},\,\cdot\, \}\quad.
\end{flalign}
The square-zero condition $\dd^2=0$ for the differential is equivalent to the 
classical master equation
\begin{flalign}
\{S_{\BV},S_{\BV}\}\,=\,0
\end{flalign}
for the BV action.
\end{rem}


\section{\label{sec:quantization}Quantization and correlation functions}
BV quantization consists of deforming the differential \eqref{eqn:totaldifferential}
on $\Obs^{\cl}$ along the BV Laplacian, see e.g.\ \cite{CostelloGwilliam,CostelloGwilliam2,NSSfuzzy}.
The latter is a linear map $\Delta_{\BV} : \Obs^{\cl}\to \Obs^{\cl}$
of cohomological degree $+1$ that is defined on the symmetric powers
$0$, $1$ and $2$ by
\begin{subequations}\label{eqn:BVLaplacian}
\begin{flalign}
\Delta_{\BV} (\oone) \,=\, 0~~,\quad
\Delta_{\BV}(\varphi)\,=\, 0~~,\quad
\Delta_{\BV}(\varphi\,\psi)\,=\, (-1)^{\vert \varphi\vert}\,\{\varphi,\psi\}\quad,
\end{flalign} 
for all generators $\varphi,\psi\in \LL$, and extended to all of $\Obs^\cl$
according to the rule
\begin{flalign}
\Delta_{\BV}(a\,b)\,=\, \Delta_{\BV}(a)\,b + (-1)^{\vert a\vert} \,a\,\Delta_{\BV}(b) + (-1)^{\vert a\vert}\, \{a,b\}\quad,
\end{flalign}
for all $a,b\in\Obs^{\cl}$. Using the algebraic properties \eqref{eqn:antibracket} of the antibracket,
one derives the closed expression
\begin{flalign}
\Delta_{\BV}(\varphi_1\cdots\varphi_n) \,=\,
\sum_{i<j}\, (-1)^{\sum_{k=1}^{i}\,\vert \varphi_k\vert + \vert\varphi_j\vert\,  \sum_{k=i+1}^{j-1}\, \vert \varphi_k\vert}~\{ \varphi_i,\varphi_j \} ~\varphi_1\cdots \check{\varphi}_i\cdots\check{\varphi}_j\cdots\varphi_n 
\end{flalign}
\end{subequations}
for the action of the BV Laplacian on the $n$-th symmetric power of $\Obs^\cl$,
where $\check{\,\cdot\,}$ means to omit the corresponding factor. From this explicit form
one easily checks that the BV Laplacian is square-zero and that
it anti-commutes with the differential $\dd$ on $\Obs^\cl$, i.e.\
\begin{flalign}
\Delta_\BV^2 \,=\, 0~~,\quad \dd\,\Delta_\BV + \Delta_{\BV}\,\dd \,=\,0\quad.
\end{flalign}
The cochain complex of quantum observables is then defined as the deformation
\begin{flalign}\label{eqn:Obsqu}
\Obs^{\qu}\,:=\,\big(\Sym(\LL),\dd^{\qu}:= \dd + \hbar \,\Delta_{\BV}\big)
\end{flalign}
of the one of classical observables given in \eqref{eqn:Obscl} and  \eqref{eqn:totaldifferential},
where $\hbar$ is a formal parameter playing the role of Planck's constant.
\sk

The quantum correlation functions are determined by the cohomology 
of the cochain complex \eqref{eqn:Obsqu} and they can be computed
perturbatively via homological perturbation theory, see e.g.\ 
\cite{Gwilliam} and also \cite{NSSfuzzy} for a review. For this we split
the quantum differential 
\begin{flalign}\label{eqn:differentialqu}
\dd^{\qu}\,=\, \dd^{\fr} + \lambda\,\dd^{\intt} +\hbar \,\Delta_{\BV}
\end{flalign}
into the free part $\dd^{\fr}$, which is obtained by setting $\lambda=0$
in \eqref{eqn:totaldifferential}, the interaction part $\dd^{\intt}$
and the quantum part $\Delta_{\BV}$. Since the free differential $\dd^{\fr}$
is by definition linear in the generators, the classical free observables
\begin{subequations}\label{eqn:Lfree}
\begin{flalign}
\Obs^{\fr}\,:=\,\big(\Sym(\LL),\dd^{\fr} \big) \,=\, \Sym\big(\LL,\dd^\fr\big)
\end{flalign}
are given by the symmetric algebra of the cochain complex 
\begin{flalign}
(\LL,\dd^{\fr})\,=\,\Big(
\xymatrix@C=2.4em{
\g[2] \ar[r]^-{\dd^\fr}& \DD[1]\ar[r]^-{\dd^\fr} & \DD^{\vee} \ar[r]^-{\dd^\fr} & \g^{\vee}[-1]
}\Big)\quad.
\end{flalign}
Explicitly, the action of the differential on our choice of vector space bases
reads as
\begin{flalign}
\dd^{\fr}t_i \,=\,\beta^a_i\, e_a ~~,\quad
\dd^{\fr}e_a\,=\, S_{ab}\,f^b~~,\quad
\dd^{\fr} f^a\,=\, - \beta^a_j\, \theta^j ~~,\quad
\dd^{\fr}\theta^i\,=\,0\quad.
\end{flalign}
\end{subequations}
To apply the homological perturbation lemma,
we choose any strong deformation retract
\begin{equation}\label{eqn:lineardefret}
\begin{tikzcd}
\big( \mathrm{H}^\bullet(\LL,\dd^{\fr}),0 \big) \ar[r,shift right=1ex,swap,"\iota"] & \ar[l,shift right=1ex,swap,"\pi"] 
(\LL,\dd^{\fr})\ar[loop,out=20,in=-20,distance=20,"h"]
\end{tikzcd}
\end{equation}
of the cochain complex $(\LL,\dd^{\fr})$ onto its cohomology.
Recall that the latter consists of two cochain maps, denoted by $\iota$ and $\pi$,
and a cochain homotopy $h$ satisfying the following properties
\begin{flalign}\label{eqn:defretaxioms}
\pi\,\iota\, =\, \id~~,\quad
\iota\,\pi - \id \,=\, \dd^{\fr}\,h + h\,\dd^{\fr}~~,\quad
h\,\iota\, = \, \pi\,h \,=\,  h^2\,=\,0\quad.
\end{flalign}
It was shown in \cite[Proposition 2.5.5]{Gwilliam} that strong deformation
retracts extend to the symmetric algebras
\begin{equation}\label{eqn:classicaldefret}
\begin{tikzcd}
\big( \Sym\, \mathrm{H}^\bullet(\LL,\dd^{\fr}),0 \big) \ar[r,shift right=1ex,swap,"I"] & \ar[l,shift right=1ex,swap,"\Pi"] 
\Obs^\fr_{~} \ar[loop,out=20,in=-20,distance=20,"H"]
\end{tikzcd}\quad.
\end{equation}
Explicitly, the cochain maps $I$ and $\Pi$ are given by extending $\iota$ and $\pi$
to dg-algebra maps via the usual formulas
\begin{flalign}\label{eqn:bigPiI}
I\big([\psi_1]\cdots[\psi_n]\big)\,:=\,\iota([\psi_1])\cdots\iota([\psi_n])~~,\quad
\Pi\big(\varphi_1\cdots\varphi_n\big) \,:=\, \pi(\varphi_1)\cdots\pi(\varphi_n)\quad,
\end{flalign}
for all $[\psi_1],\dots,[\psi_n]\in \mathrm{H}^\bullet(\LL,\dd^{\fr})$ and $\varphi_1,\dots,\varphi_n\in \LL$.
The extended cochain homotopy $H$ is slightly more complicated to describe:
From the definition of strong deformation retract, it follows that $\iota\,\pi: \LL\to \LL$
defines a projector, i.e.\ $(\iota\,\pi)^2 = \iota\,\pi$, which allows us to decompose
\begin{subequations}\label{eqn:Ldirectsum}
\begin{flalign}
\LL\,\cong\, \LL^\perp\oplus \mathrm{H}^\bullet(\LL,\dd^{\fr})
\end{flalign}
and consequently
\begin{flalign}
\Sym(\LL)\,\cong\, \Sym(\LL^\perp)\otimes \Sym\, \mathrm{H}^\bullet(\LL,\dd^{\fr})\,
=\,\bigoplus_{n\geq 0} \Sym^n(\LL^\perp)\otimes \Sym\, \mathrm{H}^\bullet(\LL,\dd^{\fr})\quad,
\end{flalign}
\end{subequations}
where $\Sym^n$ denotes the $n$-th symmetric power. The cochain homotopy $H$
is then defined as
\begin{flalign}\label{eqn:bigH}
H\big(\varphi_1^\perp\,\cdots\,\varphi_n^\perp\otimes a\big)\,:=\,\frac{1}{n}
\sum_{i=0}^n (-1)^{\sum_{j=1}^{i-1}\vert\varphi_j\vert}~
\varphi_1^\perp\,\cdots\, \varphi_{i-1}^{\perp}~h(\varphi_i^\perp)~\varphi_{i+1}^\perp \,\cdots\,\varphi_n^\perp\otimes a
\end{flalign}
on the homogeneous elements $\varphi_1^\perp\,\cdots\,\varphi_n^\perp\otimes a\in 
\Sym^n(\LL^\perp)\otimes \Sym\, \mathrm{H}^\bullet(\LL,\dd^{\fr})$ in this decomposition. 
The case $n=0$ should be read as $H(a)=0$, for all $a\in  \Sym\, \mathrm{H}^\bullet(\LL,\dd^{\fr})$.
The homological perturbation lemma (see e.g.\ \cite{Cra04}) then 
provides a strong deformation retract for the quantum observables.
\begin{propo}
Denote by 
\begin{flalign}
\delta \,:=\, \lambda\,\dd^{\intt} + \hbar \,\Delta_{\BV}
\end{flalign}
the formal perturbation of the free differential $\dd^{\fr}$ into the quantum differential 
$\dd^{\qu}$, see \eqref{eqn:differentialqu}. There exists a deformation 
of the strong deformation retract \eqref{eqn:classicaldefret} into the strong 
deformation retract
\begin{equation}\label{eqn:quantumdefret}
\begin{tikzcd}
\big( \Sym\, \mathrm{H}^\bullet(\LL,\dd^{\fr}),\widetilde{\delta} \big) \ar[r,shift right=1ex,swap,"\widetilde{I}"] & \ar[l,shift right=1ex,swap,"\widetilde{\Pi}"] 
\Obs^\qu_{~} \ar[loop,out=20,in=-20,distance=20,"\widetilde{H}"]
\end{tikzcd}
\end{equation}
for the quantum observables \eqref{eqn:Obsqu}, where
\begin{subequations}
\begin{flalign}
\widetilde{\delta}\,&=\, \Pi\, (\id - \delta  H)^{-1}\, \delta\, I\quad,\\[4pt]
\widetilde{I}\,&=\, I + H\, (\id - \delta  H)^{-1}\,\delta\, I\quad,\\[4pt]
\widetilde{\Pi}\,&=\, \Pi + \Pi\, (\id - \delta  H)^{-1}\,\delta\, H\quad,\\[4pt]
\widetilde{H} \,&=\, H + H\, (\id - \delta  H)^{-1}\, \delta\, H\quad.
\end{flalign}
\end{subequations}
The expression $(\id - \delta H)^{-1} := \sum_{k=0}^\infty (\delta H)^k$ is defined
as a formal power series in both $\lambda$ and $\hbar$.
\end{propo}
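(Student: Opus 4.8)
The statement is an instance of the homological perturbation lemma (HPL) \cite{Cra04}, so the plan is to verify its hypotheses for our situation and then read off the conclusion. The input datum is the strong deformation retract \eqref{eqn:classicaldefret} of the free classical observables $\Obs^\fr = \Sym(\LL,\dd^\fr)$ onto their cohomology. By \cite[Proposition 2.5.5]{Gwilliam}, the extended data $(I,\Pi,H)$ genuinely form a \emph{strong} deformation retract, i.e.\ they satisfy the full set of side conditions $\Pi\,I = \id$, $I\,\Pi - \id = \dd^\fr\,H + H\,\dd^\fr$ and $H\,I = \Pi\,H = H^2 = 0$ (the symmetric-algebra analogues of \eqref{eqn:defretaxioms}); these are precisely the ingredients needed for the side-condition--preserving form of the HPL.

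Next I would check that $\delta = \lambda\,\dd^{\intt} + \hbar\,\Delta_{\BV}$ is an admissible perturbation of $\dd^\fr$, i.e.\ that $\dd^\fr + \delta = \dd^{\qu}$ squares to zero. This is immediate from the decomposition \eqref{eqn:differentialqu} together with the facts already established in Section \ref{sec:quantization}, namely $\dd^2 = 0$, $\Delta_{\BV}^2 = 0$ and $\dd\,\Delta_{\BV} + \Delta_{\BV}\,\dd = 0$, which jointly give $(\dd^{\qu})^2 = 0$; since $\dd^\fr$ is itself square-zero, $\delta$ indeed perturbs the differential in the required sense.

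The step that carries the actual content is to verify that $\id - \delta\,H$ is invertible, so that $(\id - \delta H)^{-1} = \sum_{k\geq 0}(\delta H)^k$ is meaningful. Here I would use that each summand of $\delta$ carries a strictly positive power of a formal parameter, $\lambda\,\dd^{\intt}$ being of order $\lambda^1$ and $\hbar\,\Delta_{\BV}$ of order $\hbar^1$. Hence every application of $\delta$ raises the total degree in the pair $(\lambda,\hbar)$ by at least one, so that $(\delta H)^k$ has total order $\geq k$. Working with formal power series in $\lambda$ and $\hbar$, only the finitely many terms with $k \leq m+n$ contribute to the coefficient of any fixed monomial $\lambda^m\hbar^n$; the series therefore defines a well-defined endomorphism inverting $\id - \delta H$. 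This order-counting bookkeeping is the main (and essentially the only) obstacle, since it is what makes all four expressions $\widetilde\delta$, $\widetilde I$, $\widetilde\Pi$, $\widetilde H$ well-defined as formal power series.

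With the above hypotheses verified, the HPL \cite{Cra04} applies verbatim and produces a new strong deformation retract whose contracted differential, inclusion, projection and homotopy are given exactly by the displayed formulas, yielding \eqref{eqn:quantumdefret}. In particular the perturbed data automatically inherit the SDR side conditions, so no further algebraic identities need to be checked by hand.
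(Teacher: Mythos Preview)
Your proposal is correct and follows exactly the approach of the paper, which simply invokes the homological perturbation lemma \cite{Cra04} without spelling out a proof. In fact you have been more explicit than the paper in verifying the hypotheses (the SDR side conditions, square-zero of $\dd^{\qu}$, and smallness of $\delta$ via the formal-parameter filtration), so there is nothing to add.
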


The quantum correlation functions are defined by
\begin{flalign}\label{eqn:correlation}
\langle \varphi_1 \cdots\varphi_n\rangle \,:=\,
\widetilde{\Pi}(\varphi_1\cdots\varphi_n)
\,=\, \sum_{k=0}^\infty \Pi\big((\delta H)^k\big(\varphi_1\cdots\varphi_n\big)\big)\,\in\, 
\Sym\, \mathrm{H}^\bullet(\LL,\dd^{\fr})\quad,
\end{flalign}
for all $\varphi_1,\dots,\varphi_n\in \LL$. Note that these are elements
of the symmetric algebra of the cohomology $\mathrm{H}^\bullet(\LL,\dd^{\fr})$,
i.e.\ they are polynomial functions on the space of vacua of the theory, see e.g.\
\cite[Section 3.1]{NSSfuzzy} for an illustration. As we shall illustrate
in the next section, one can use graphical tools to facilitate the computation
of the correlation functions \eqref{eqn:correlation}.


\section{\label{sec:D0=0}Perturbations around $D_0=0$}
In this section we prove some general results about the quantum correlation
functions for perturbations around the zero Dirac operator $D_0=0$.
We fix an arbitrary $(p,q)$-fermion space over $\AA=\Mat_{N}(\bbC)$
and consider any gauge-invariant polynomial action of the form
\begin{flalign}\label{eqn:actionquadratic}
S(D)\,=\, \frac{g_2}{2}\,\Tr_{\End(\HH)}\big(D^2\big) + S^{\intt}(D)\quad,
\end{flalign}
where $g_2\neq 0$ is a non-zero constant and the interaction term $S^\intt$ is a sum of monomials of degree $\geq 3$.
The zero Dirac operator $D_0=0$ is an exact solution of the Euler-Lagrange equation
of this action, hence it is an admissible choice for the background Dirac operator
$D_0$ in our perturbative approach. Our motivation for using the standard choice for the
quadratic term in the action \eqref{eqn:actionquadratic} is as follows:
The complex vector space $\End(\HH)$ can be endowed with the
Hermitian inner product $\langle B,B^\prime\rangle:=
\Tr_{\End(\HH)}(B^\ast\,B^\prime)$, for all $B,B^\prime\in \End(\HH)$.
This restricts to a real inner product on the subspace $\DD\subseteq \End(\HH)$
of Dirac operators defined in Definition \ref{def:Diracoperator}, which implies
that the quadratic term of the action \eqref{eqn:actionquadratic} is non-degenerate.
Hence, as we will see below, it will lead to a particularly simple propagator.
\sk

Let us focus now on the free theory underlying this model \eqref{eqn:Lfree}.
Recalling the definition of the structure constants \eqref{eqn:structureconstants},
we observe that for $D_0=0$ the constants $\beta^a_i=0$ vanish, hence the cochain complex
of linear observables simplifies to
\begin{flalign}
(\LL,\dd^{\fr})\,=\,\Big(
\xymatrix@C=2.4em{
\g[2] \ar[r]^-{0}& \DD[1]\ar[r]^-{\dd^\fr} & \DD^{\vee} \ar[r]^-{0} & \g^{\vee}[-1]
}\Big)\quad,
\end{flalign}
where we recall that $\dd^\fr e_a = S_{ab}\,f^b$ is controlled by the quadratic term of the action.
Because $S_{ab}$ is invertible, one easily checks that the cohomology of this complex is
\begin{flalign}
\mathrm{H}^\bullet(\LL,\dd^{\fr}) \,=\, \g[2]\oplus \g^\vee[-1]\quad. 
\end{flalign}
For the strong deformation retract \eqref{eqn:lineardefret}, we can choose
\begin{flalign}\label{eqn:defretractD0}
\pi\,:\, \begin{cases}
t_i\mapsto t_i\\
e_a\mapsto 0\\
f^a\mapsto 0\\
\theta^i\mapsto \theta^i
\end{cases}\quad,\qquad
\iota\,:\, \begin{cases}
t_i\mapsto t_i\\
\theta^i\mapsto \theta^i
\end{cases}\quad,\qquad
h\,:\,\begin{cases}
t_i\mapsto 0\\
e_a\mapsto 0\\
f^a\mapsto -S^{ab}\,e_b\\
\theta^i\mapsto 0
\end{cases}\quad,
\end{flalign}
where $S^{ab}$ denotes the inverse of $S_{ab}$, i.e.\ 
$S^{ab}S_{bc} = \delta^a_c = S_{cb}S^{ba}$. (The relevant properties 
\eqref{eqn:defretaxioms} are easy to confirm.)
\sk

These are all the necessary ingredients to compute the quantum 
correlation functions for our model.
Indeed, all maps entering \eqref{eqn:correlation}
have been completely defined:
\begin{itemize}
\item[(i)] The cochain homotopy $H$ is defined
by \eqref{eqn:bigH} and its action on generators \eqref{eqn:defretractD0}.
The relevant direct sum decomposition \eqref{eqn:Ldirectsum} in the present case
is given by
\begin{flalign}\label{eqn:Ldirectsum2}
\LL\,=\, \LL^\perp \oplus \mathrm{H}^\bullet(\LL,\dd^{\fr}) \,=\,
\Big(
\xymatrix@C=2.4em{
\DD[1]\ar[r]^-{\dd^\fr} & \DD^{\vee}
}\Big)\oplus \Big( \g[2]\oplus \g^\vee[-1]\Big)\quad.
\end{flalign}

\item[(ii)] The perturbation of the differential is given by $\delta = \lambda\,\dd^\intt + \hbar \,\Delta_{\BV}$.
The interaction part $\dd^\intt$ is defined by the graded Leibniz 
rule and the order $\lambda^{\geq 1}$ terms in \eqref{eqn:totaldifferential}.
The BV Laplacian $\Delta_{\BV}$ is defined by \eqref{eqn:BVLaplacian} and \eqref{eqn:P0}.

\item[(iii)] The dg-algebra map $\Pi$ is defined by \eqref{eqn:bigPiI} and its action on generators 
\eqref{eqn:defretractD0}.
\end{itemize}

To facilitate the computation of correlation functions \eqref{eqn:correlation}, 
we introduce a convenient graphical notation. We denote elements
$\varphi_1\cdots\varphi_n\in \Sym (\LL)$ by $n$ vertical lines. Because there
are four different species of fields (fields, ghosts, antifields and antifields for ghosts),
which are distinguished by their cohomological degree in $\LL$, we require four different
types of lines
\begin{flalign}\label{eqn:linetypes}
t_i\,=~\vcenter{\hbox{\begin{tikzpicture}[scale=0.5]
\draw[thick,dotted] (0,0) -- (0,1.5);
\end{tikzpicture}}}\quad,\qquad
e_a\,=~\vcenter{\hbox{\begin{tikzpicture}[scale=0.5]
\draw[thick,snake it] (0,0) -- (0,1.5);
\end{tikzpicture}}}\quad,\qquad
f^a\,=~\vcenter{\hbox{\begin{tikzpicture}[scale=0.5]
\draw[thick] (0,0) -- (0,1.5);
\end{tikzpicture}}}\quad,\qquad
\theta^i\,=~\vcenter{\hbox{\begin{tikzpicture}[scale=0.5]
\draw[thick,densely dashed] (0,0) -- (0,1.5);
\end{tikzpicture}}}\quad .
\end{flalign}
The action of the cochain homotopy $H$ on $n$ lines can be expressed
via \eqref{eqn:bigH} as a sum of actions on the individual lines. (We would like to emphasize that,
because of \eqref{eqn:Ldirectsum2}, the number $n$ in \eqref{eqn:bigH}
counts only the number of wiggly and straight lines, i.e.\ dashed and dotted lines
do not contribute to $n$.)
Using \eqref{eqn:defretractD0}, we observe that the cochain homotopy $H$ is only non-zero
when acting on $f^a$, which we depict as
\begin{flalign}\label{eqn:homotopypicture}
H\Big(~~\vcenter{\hbox{\begin{tikzpicture}[scale=0.5]
\draw[thick] (0,0) -- (0,1.5);
\end{tikzpicture}}}~~\Big)\,=~
\vcenter{\hbox{\begin{tikzpicture}[scale=0.5]
\draw[thick] (0,0) -- (0,0.75);
\draw[thick,snake it] (0,0.75) -- (0,1.5);
\draw[black,fill=black] (0,0.75) circle (.75ex);
\end{tikzpicture}}}\quad.
\end{flalign}
The action of the interaction part $\dd^{\intt}$ of the differential
on $n$ lines can be expressed via the graded Leibniz rule as a sum of 
actions on the individual lines. The latter may be visualized in terms of
interaction vertices
\begin{subequations}\label{eqn:interactionverticespicture}
\begin{flalign}
\lambda\,\dd^{\intt}\Big(~~\vcenter{\hbox{\begin{tikzpicture}[scale=0.5]
\draw[thick,dotted] (0,0) -- (0,1.5);
\end{tikzpicture}}}~~\Big)\,&=~ \lambda~
\vcenter{\hbox{\begin{tikzpicture}[scale=0.5]
\draw[thick,dotted] (0,0) -- (0,0.75);
\draw[thick,snake it] (0,0.75) -- (-0.75,1.5);
\draw[thick] (0,0.75) -- (0.75,1.5);
\end{tikzpicture}}}~+~\lambda~
\vcenter{\hbox{\begin{tikzpicture}[scale=0.5]
\draw[thick,dotted] (0,0) -- (0,0.75);
\draw[thick,dotted] (0,0.75) -- (-0.75,1.5);
\draw[thick,densely dashed] (0,0.75) -- (0.75,1.5);
\end{tikzpicture}}}\quad,\\[4pt]
\label{eqn:interactionmanylegs}\lambda\,\dd^{\intt}\Big(~~\vcenter{\hbox{\begin{tikzpicture}[scale=0.5]
\draw[thick,snake it] (0,0) -- (0,1.5);
\end{tikzpicture}}}~~\Big)\,&=~ \sum_{n\geq 3} \frac{\lambda^{n-2}}{(n-1)!} \vcenter{\hbox{\begin{tikzpicture}[scale=0.5]
\draw[thick,snake it] (0,0) -- (0,0.75);
\draw[thick] (0,0.75) -- (-0.75,1.5);
\draw[thick] (0,0.75) -- (0.75,1.5);
\node (ov) at (0,2)  {\footnotesize{$n{-}1\text{ legs}$}};
\node (ol) at (0,-0.5)  {\footnotesize{~~}};
\node (dottt) at (0,1.3) {$\cdots$};
\end{tikzpicture}}}~+~\lambda~
\vcenter{\hbox{\begin{tikzpicture}[scale=0.5]
\draw[thick,snake it] (0,0) -- (0,0.75);
\draw[thick,snake it] (0,0.75) -- (-0.75,1.5);
\draw[thick,densely dashed] (0,0.75) -- (0.75,1.5);
\end{tikzpicture}}}\quad,\\[4pt]
\lambda\,\dd^{\intt}\Big(~~\vcenter{\hbox{\begin{tikzpicture}[scale=0.5]
\draw[thick] (0,0) -- (0,1.5);
\end{tikzpicture}}}~~\Big)\,&=~\lambda~
\vcenter{\hbox{\begin{tikzpicture}[scale=0.5]
\draw[thick] (0,0) -- (0,0.75);
\draw[thick] (0,0.75) -- (-0.75,1.5);
\draw[thick,densely dashed] (0,0.75) -- (0.75,1.5);
\end{tikzpicture}}}\quad,\\[4pt]
\lambda\,\dd^{\intt}\Big(~~\vcenter{\hbox{\begin{tikzpicture}[scale=0.5]
\draw[thick,densely dashed] (0,0) -- (0,1.5);
\end{tikzpicture}}}~~\Big)\,&=~\lambda~
\vcenter{\hbox{\begin{tikzpicture}[scale=0.5]
\draw[thick,densely dashed] (0,0) -- (0,0.75);
\draw[thick, densely dashed] (0,0.75) -- (-0.75,1.5);
\draw[thick,densely dashed] (0,0.75) -- (0.75,1.5);
\end{tikzpicture}}}\quad,
\end{flalign}
\end{subequations}
which should be read from bottom to top and whose
numerical values are given in \eqref{eqn:totaldifferential}.
In these pictures, any two neighboring lines can be permuted (up to a Koszul 
sign determined by their cohomological degrees) because they represent elements
in the graded symmetric algebra $\Sym(\LL)$.
The action of the BV Laplacian on $n$ lines can be reduced
via its algebraic properties \eqref{eqn:BVLaplacian} to a sum of pairings
between two lines. Using \eqref{eqn:P0}, we observe that the latter is only non-zero
when pairing between $t_i$ and $\theta^j$ and when pairing between $e_a$ and $f^b$, 
which we depict as
\begin{flalign}\label{eqn:BVLaplacianpictures}
\hbar\,\Delta_{\BV}\Big(~~\vcenter{\hbox{\begin{tikzpicture}[scale=0.5]
\draw[thick,dotted] (0,0) -- (0,1.5);
\draw[thick,densely dashed] (0.75,0) -- (0.75,1.5);
\end{tikzpicture}}}~~\Big)\,=~ \hbar~~
\vcenter{\hbox{\begin{tikzpicture}[scale=0.5]
\draw[thick,dotted] (0,0) -- (0,1.25);
\draw[thick,densely dashed] (0.75,0) -- (0.75,1.25);
\draw[thick,densely dashed] (0,1.25) to[out=90,in=90] (0.75,1.25);
\end{tikzpicture}}}\quad,\qquad
\hbar\,\Delta_{\BV}\Big(~~\vcenter{\hbox{\begin{tikzpicture}[scale=0.5]
\draw[thick,snake it] (0,0) -- (0,1.5);
\draw[thick] (0.75,0) -- (0.75,1.5);
\end{tikzpicture}}}~~\Big)\,=~ \hbar~~
\vcenter{\hbox{\begin{tikzpicture}[scale=0.5]
\draw[thick,snake it] (0,0) -- (0,1.25);
\draw[thick] (0.75,0) -- (0.75,1.25);
\draw[thick] (0,1.25) to[out=90,in=90] (0.75,1.25);
\end{tikzpicture}}}\quad.
\end{flalign}
Again, the ingoing lines can be permuted 
(up to Koszul signs that are trivial in the present case) because they represent
elements in the graded symmetric algebra $\Sym(\LL)$.
Finally, the action of the dg-algebra map $\Pi$ on $n$ lines reduces via
\eqref{eqn:bigPiI} to a product of actions on the individual lines, evaluating
them to their cohomology classes \eqref{eqn:defretractD0}. This is only non-zero
for $t_i$ and $\theta^i$, which we will depict by
\begin{flalign}
\Pi\Big(~~\vcenter{\hbox{\begin{tikzpicture}[scale=0.5]
\draw[thick,dotted] (0,0) -- (0,1.5);
\end{tikzpicture}}}~~\Big)\,=~
\vcenter{\hbox{\begin{tikzpicture}[scale=0.5]
\draw[thick,dotted] (0,0) -- (0,1.5);
\draw[black,fill=white] (0,1.5) circle (.75ex);
\end{tikzpicture}}}\quad,\qquad
\Pi\Big(~~\vcenter{\hbox{\begin{tikzpicture}[scale=0.5]
\draw[thick,densely dashed] (0,0) -- (0,1.5);
\end{tikzpicture}}}~~\Big)\,=~
\vcenter{\hbox{\begin{tikzpicture}[scale=0.5]
\draw[thick,densely dashed] (0,0) -- (0,1.5);
\draw[black,fill=white] (0,1.5) circle (.75ex);
\end{tikzpicture}}}\quad.
\end{flalign}

\begin{ex}\label{ex:Dis0}
To illustrate the diagrammatic calculus, let us consider 
a quartic interaction term (i.e.\ the $n = 4$ term is the only non-trivial 
summand in \eqref{eqn:interactionmanylegs}) 
and compute the $2$-point correlation function
\begin{flalign}
\langle \varphi_1\, \varphi_2\rangle \,=\,
\sum_{k=0}^\infty \Pi\big((\delta H)^k\big(\varphi_1\, \varphi_2\big)\big)
\quad, \quad\quad \varphi_{1},\varphi_{2} \in \LL^{0} = \DD^\vee\quad,
\end{flalign}
for two generators in degree $0$ to the 
lowest non-trivial order in the formal parameter $\lambda$. 
The first step of this iterative process is computing
\begin{flalign}
\nn	\delta H\big(\varphi_1\, \varphi_2\big) &\,=~ \frac{1}{2}\,\delta
	\Big(~~
	\vcenter{\hbox{\begin{tikzpicture}[scale=0.5]
		\draw[thick] (0,0) -- (0,0.75);
		\draw[thick,snake it] (0,0.75) -- (0,1.5);
		\draw[black,fill=black] (0,0.75) circle (.75ex);
		\draw[thick] (0.75,0) -- (0.75,1.5);
	\end{tikzpicture}}}
	~+~
	\vcenter{\hbox{\begin{tikzpicture}[scale=0.5]
		\draw[thick] (0,0) -- (0,1.5);
		\draw[thick] (0.75,0) -- (0.75,0.75);
		\draw[thick,snake it] (0.75,0.75) -- (0.75,1.5);
		\draw[black,fill=black] (0.75,0.75) circle (.75ex);
	\end{tikzpicture}}}
	~~\Big)\\[4pt]
\nn	&\,=~
	\frac{\hbar}{2}\,\Big(~~
	\vcenter{\hbox{\begin{tikzpicture}[scale=0.5]
		\draw[thick] (0,0) -- (0,0.75);
		\draw[thick,snake it] (0,0.75) -- (0,1.5);
		\draw[black,fill=black] (0,0.75) circle (.75ex);
		\draw[thick] (0.75,0) -- (0.75,1.5);
		\draw[thick] (0,1.5) to[out=90,in=90] (0.75,1.5);
	\end{tikzpicture}}}
	~+~
	\vcenter{\hbox{\begin{tikzpicture}[scale=0.5]
		\draw[thick] (0,0) -- (0,1.5);
		\draw[thick] (0,1.5) to[out=90,in=90] (0.75,1.5);	
		\draw[thick] (0.75,0) -- (0.75,0.75);
		\draw[thick,snake it] (0.75,0.75) -- (0.75,1.5);
		\draw[black,fill=black] (0.75,0.75) circle (.75ex);
	\end{tikzpicture}}}
	~~\Big)
	~~+~\frac{\lambda^{2}}{3!\,2}~~
	\vcenter{\hbox{\begin{tikzpicture}[scale=0.5]
		\draw[thick] (0,0) -- (0,0.75);
		\draw[thick,snake it] (0,0.75) -- (0,1.5);
		\draw[black,fill=black] (0,0.75) circle (.75ex);
		\draw[thick] (0,1.5) -- (-0.75,2.25);
		\draw[thick] (0,1.5) -- (0,2.25);
		\draw[thick] (0,1.5) -- (0.75,2.25);
		\draw[thick] (1.5,0) -- (1.5,1.5);
	\end{tikzpicture}}}
	~~+~\frac{\lambda}{2}~~
	\vcenter{\hbox{\begin{tikzpicture}[scale=0.5]
		\draw[thick] (0,0) -- (0,0.75);
		\draw[thick,snake it] (0,0.75) -- (0,1.5);
		\draw[black,fill=black] (0,0.75) circle (.75ex);
		\draw[thick,snake it] (0,1.5) -- (-0.75,2.25);
		\draw[thick,densely dashed] (0,1.5) -- (0.75,2.25);
		\draw[thick] (1.5,0) -- (1.5,1.5);
	\end{tikzpicture}}}
	~~-~\frac{\lambda}{2}~~
	\vcenter{\hbox{\begin{tikzpicture}[scale=0.5]
		\draw[thick] (0,0) -- (0,0.75);
		\draw[thick,snake it] (0,0.75) -- (0,1.5);
		\draw[black,fill=black] (0,0.75) circle (.75ex);
		\draw[thick] (1.5,0) -- (1.5,1.5);
		\draw[thick] (1.5,1.5) -- (0.75,2.25);
		\draw[thick,densely dashed] (1.5,1.5) -- (2.25,2.25);
	\end{tikzpicture}}}\\[4pt]
\nn	&\quad\quad~~+~ \frac{\lambda}{2}~~
	\vcenter{\hbox{\begin{tikzpicture}[scale=0.5]
		\draw[thick] (0,0) -- (0,1.5);
		\draw[thick] (0,1.5) -- (-0.75,2.25);
		\draw[thick,densely dashed] (0,1.5) -- (0.75,2.25);
		\draw[thick] (1.5,0) -- (1.5,0.75);
		\draw[thick,snake it] (1.5,0.75) -- (1.5,1.5);
		\draw[black,fill=black] (1.5,0.75) circle (.75ex);
	\end{tikzpicture}}}
	~~+~ \frac{\lambda^{2}}{3!\,2}~~
	\vcenter{\hbox{\begin{tikzpicture}[scale=0.5]
		\draw[thick] (0,0) -- (0,1.5);
		\draw[thick] (1.5,0) -- (1.5,0.75);
		\draw[thick,snake it] (1.5,0.75) -- (1.5,1.5);
		\draw[black,fill=black] (1.5,0.75) circle (.75ex);
		\draw[thick] (1.5,1.5) -- (0.75,2.25);
		\draw[thick] (1.5,1.5) -- (1.5,2.25);
		\draw[thick] (1.5,1.5) -- (2.25,2.25);
	\end{tikzpicture}}}
	~~+~ \frac{\lambda}{2}~~
	\vcenter{\hbox{\begin{tikzpicture}[scale=0.5]
		\draw[thick] (0,0) -- (0,1.5);
		\draw[thick] (1.5,0) -- (1.5,0.75);
		\draw[thick,snake it] (1.5,0.75) -- (1.5,1.5);
		\draw[black,fill=black] (1.5,0.75) circle (.75ex);
		\draw[thick,snake it] (1.5,1.5) -- (0.75,2.25);
		\draw[thick,densely dashed] (1.5,1.5) -- (2.25,2.25);
	\end{tikzpicture}}}\\[4pt]
\nn	&\,=~
	\hbar~~
	\vcenter{\hbox{\begin{tikzpicture}[scale=0.5]
		\draw[thick] (0,0) -- (0,0.75);
		\draw[thick,snake it] (0,0.75) -- (0,1.5);
		\draw[black,fill=black] (0,0.75) circle (.75ex);
		\draw[thick] (0.75,0) -- (0.75,1.5);
		\draw[thick] (0,1.5) to[out=90,in=90] (0.75,1.5);
	\end{tikzpicture}}}
	~~+~~ \frac{\lambda^{2}}{3!\,2}\,\Bigg(~~
	\vcenter{\hbox{\begin{tikzpicture}[scale=0.5]
		\draw[thick] (0,0) -- (0,0.75);
		\draw[thick,snake it] (0,0.75) -- (0,1.5);
		\draw[black,fill=black] (0,0.75) circle (.75ex);
		\draw[thick] (0,1.5) -- (-0.75,2.25);
		\draw[thick] (0,1.5) -- (0,2.25);
		\draw[thick] (0,1.5) -- (0.75,2.25);
		\draw[thick] (1.5,0) -- (1.5,1.5);
	\end{tikzpicture}}}
	~~+~~
	\vcenter{\hbox{\begin{tikzpicture}[scale=0.5]
		\draw[thick] (0,0) -- (0,1.5);
		\draw[thick] (1.5,0) -- (1.5,0.75);
		\draw[thick,snake it] (1.5,0.75) -- (1.5,1.5);
		\draw[black,fill=black] (1.5,0.75) circle (.75ex);
		\draw[thick] (1.5,1.5) -- (0.75,2.25);
		\draw[thick] (1.5,1.5) -- (1.5,2.25);
		\draw[thick] (1.5,1.5) -- (2.25,2.25);
	\end{tikzpicture}}}
	~~\Bigg)\\[4pt]
	&\quad\quad ~~+~~ \frac{\lambda}{2}~
	\Bigg(~~
	\vcenter{\hbox{\begin{tikzpicture}[scale=0.5]
		\draw[thick] (0,0) -- (0,0.75);
		\draw[thick,snake it] (0,0.75) -- (0,1.5);
		\draw[black,fill=black] (0,0.75) circle (.75ex);
		\draw[thick,snake it] (0,1.5) -- (-0.75,2.25);
		\draw[thick,densely dashed] (0,1.5) -- (0.75,2.25);
		\draw[thick] (1.5,0) -- (1.5,1.5);
	\end{tikzpicture}}}
	~~-~~
	\vcenter{\hbox{\begin{tikzpicture}[scale=0.5]
		\draw[thick] (0,0) -- (0,0.75);
		\draw[thick,snake it] (0,0.75) -- (0,1.5);
		\draw[black,fill=black] (0,0.75) circle (.75ex);
		\draw[thick] (1.5,0) -- (1.5,1.5);
		\draw[thick] (1.5,1.5) -- (0.75,2.25);
		\draw[thick,densely dashed] (1.5,1.5) -- (2.25,2.25);
	\end{tikzpicture}}}
	~~+~~
	\vcenter{\hbox{\begin{tikzpicture}[scale=0.5]
		\draw[thick] (0,0) -- (0,1.5);
		\draw[thick] (0,1.5) -- (-0.75,2.25);
		\draw[thick,densely dashed] (0,1.5) -- (0.75,2.25);
		\draw[thick] (1.5,0) -- (1.5,0.75);
		\draw[thick,snake it] (1.5,0.75) -- (1.5,1.5);
		\draw[black,fill=black] (1.5,0.75) circle (.75ex);
	\end{tikzpicture}}}
	~~+~~
	\vcenter{\hbox{\begin{tikzpicture}[scale=0.5]
		\draw[thick] (0,0) -- (0,1.5);
		\draw[thick] (1.5,0) -- (1.5,0.75);
		\draw[thick,snake it] (1.5,0.75) -- (1.5,1.5);
		\draw[black,fill=black] (1.5,0.75) circle (.75ex);
		\draw[thick,snake it] (1.5,1.5) -- (0.75,2.25);
		\draw[thick,densely dashed] (1.5,1.5) -- (2.25,2.25);
	\end{tikzpicture}}}
	~~\Bigg)\quad.
\end{flalign}
The negative sign in the second equality is a Koszul sign arising from the fact that 
$\delta = \lambda\,\dd^\intt + \hbar \,\Delta_{\BV}$ is of cohomological degree $1$.
In the third equality, we have collected the interaction terms according to their power in $\lambda$.
The simplification of the $\hbar$-term is a consequence of the fact that
\begin{subequations}\label{eqn:homotopypull}
\begin{flalign}
	\vcenter{\hbox{\begin{tikzpicture}[scale=0.5]
		\draw[thick] (0,0) -- (0,0.75);
		\draw[thick,snake it] (0,0.75) -- (0,1.5);
		\draw[black,fill=black] (0,0.75) circle (.75ex);
		\draw[thick] (0.75,0) -- (0.75,1.5);
		\draw[thick] (0,1.5) to[out=90,in=90] (0.75,1.5);
	\end{tikzpicture}}}
	~=\, \Delta_{\BV}\big(h(\varphi_1)\, \varphi_{2}\big)
	\,=\, -\{h(\varphi_{1}),\varphi_{2}\} \,=\, \{\varphi_{1},h(\varphi_{2})\}
	\,=\, \Delta_{\BV}\big(\varphi_1\, h(\varphi_{2})\big) \,=~
	\vcenter{\hbox{\begin{tikzpicture}[scale=0.5]
		\draw[thick] (0,0) -- (0,1.5);
		\draw[thick] (0,1.5) to[out=90,in=90] (0.75,1.5);	
		\draw[thick] (0.75,0) -- (0.75,0.75);
		\draw[thick,snake it] (0.75,0.75) -- (0.75,1.5);
		\draw[black,fill=black] (0.75,0.75) circle (.75ex);
	\end{tikzpicture}}}\quad,
\end{flalign}
where the middle step is easily checked on basis elements
\begin{flalign}
-\{h(f^a),f^b\}\,=\, S^{ac}\,\{e_c,f^b\}\,= \, -S^{ab} \,=\, -S^{ba}\,=\,
-S^{bc}\{f^a,e_c\}\,= \,\{f^a,h(f^b)\}
\end{flalign}
\end{subequations}
by using \eqref{eqn:P0}, \eqref{eqn:defretractD0} and symmetry of $S^{ab}$.
\sk

Using similar arguments and the fact that the interaction term 
$S_{a_{1} a_2 a_3 a_{4}}$ is symmetric under the exchange of indices, 
the second iteration is given by
\begin{flalign}
\nn	(\delta H)^2\big(\varphi_1\, \varphi_2\big) &\,=~
	\frac{\hbar\,\lambda^{2}}{8}\Bigg(~~
	\vcenter{\hbox{\begin{tikzpicture}[scale=0.5]
		\draw[thick] (0,0) -- (0,0.75);
		\draw[thick,snake it] (0,0.75) -- (0,1.5);
		\draw[black,fill=black] (0,0.75) circle (.75ex);
		\draw[thick] (0,1.5) -- (-0.75,2.25);
		\draw[black,fill=black] (-0.75,2.25) circle (.75ex);
		\draw[thick,snake it] (-0.75,2.25) -- (-0.75,3);
		\draw[thick] (-0.75,3) to[out=90,in=90] (0,3);
		\draw[thick] (0,1.5) -- (0,3);
		\draw[thick] (0,1.5) -- (0.75,2.25);
		\draw[thick] (1.5,0) -- (1.5,2.25);
	\end{tikzpicture}}}
	~~+~~
	\vcenter{\hbox{\begin{tikzpicture}[scale=0.5]
		\draw[thick] (0,0) -- (0,0.75);
		\draw[thick,snake it] (0,0.75) -- (0,1.5);
		\draw[black,fill=black] (0,0.75) circle (.75ex);
		\draw[thick] (0,1.5) -- (-0.75,2.25);
		\draw[thick] (0,1.5) -- (0,2.25);
		\draw[thick] (0,1.5) -- (0.75,2.25);
		\draw[thick] (1.5,0) -- (1.5,1.125);
		\draw[thick,snake it] (1.5,1.125) -- (1.5,2.25);
		\draw[black,fill=black] (1.5,1.125) circle (.75ex);
		\draw[thick] (0.75,2.25) to[out=90,in=90] (1.5,2.25);
	\end{tikzpicture}}}
	~~+~~
	\vcenter{\hbox{\begin{tikzpicture}[scale=0.5]
		\draw[thick] (0,0) -- (0,1.125);
		\draw[thick,snake it] (0,1.125) -- (0,2.25);
		\draw[black,fill=black] (0,1.125) circle (.75ex);
		\draw[thick] (0,2.25) to[out=90,in=90] (0.75,2.25);
		\draw[thick] (1.5,0) -- (1.5,0.75);
		\draw[thick,snake it] (1.5,0.75) -- (1.5,1.5);
		\draw[black,fill=black] (1.5,0.75) circle (.75ex);
		\draw[thick] (1.5,1.5) -- (0.75,2.25);
		\draw[thick] (1.5,1.5) -- (1.5,2.25);
		\draw[thick] (1.5,1.5) -- (2.25,2.25);
	\end{tikzpicture}}}
	~~+~~
	\vcenter{\hbox{\begin{tikzpicture}[scale=0.5]
		\draw[thick] (0,0) -- (0,2.25);
		\draw[thick] (1.5,0) -- (1.5,0.75);
		\draw[thick,snake it] (1.5,0.75) -- (1.5,1.5);
		\draw[black,fill=black] (1.5,0.75) circle (.75ex);
		\draw[thick] (1.5,1.5) -- (0.75,2.25);
		\draw[thick] (1.5,1.5) -- (1.5,3);
		\draw[thick] (1.5,1.5) -- (2.25,2.25);
		\draw[thick,snake it] (2.25,2.25) -- (2.25,3);
		\draw[black,fill=black] (2.25,2.25) circle (.75ex);
		\draw[thick] (1.5,3) to[out=90,in=90] (2.25,3);
	\end{tikzpicture}}}
	~~\Bigg)\\[4pt]
\nn	&\quad\quad ~+~\frac{\lambda^2}{4}~\Bigg(~~
	\vcenter{\hbox{\begin{tikzpicture}[scale=0.5]
		\draw[thick] (0,0) -- (0,0.75);
		\draw[thick,snake it] (0,0.75) -- (0,1.5);
		\draw[black,fill=black] (0,0.75) circle (.75ex);
		\draw[thick,snake it] (0,1.5) -- (-0.75,2.25);
		\draw[thick,densely dashed] (0,1.5) -- (0.75,2.25);
		\draw[thick,snake it] (-0.75,2.25) -- (-1.5,3);
		\draw[thick,densely dashed] (-0.75,2.25) -- (0,3);
		\draw[thick] (1.5,0) -- (1.5,1.125);
		\draw[thick,snake it] (1.5,1.125) -- (1.5,2.25);
		\draw[black,fill=black] (1.5,1.125) circle (.75ex);
	\end{tikzpicture}}}
	~~-~~
	\vcenter{\hbox{\begin{tikzpicture}[scale=0.5]
		\draw[thick] (0,0) -- (0,0.75);
		\draw[thick,snake it] (0,0.75) -- (0,1.5);
		\draw[black,fill=black] (0,0.75) circle (.75ex);
		\draw[thick,snake it] (0,1.5) -- (-0.75,2.25);
		\draw[thick,densely dashed] (0,1.5) -- (0.75,2.25);
		\draw[thick,densely dashed] (0.75,2.25) -- (0,3);
		\draw[thick,densely dashed] (0.75,2.25) -- (1.5,3);
		\draw[thick] (2.25,0) -- (2.25,1.125);
		\draw[thick,snake it] (2.25,1.125) -- (2.25,2.25);
		\draw[black,fill=black] (2.25,1.125) circle (.75ex);
	\end{tikzpicture}}}
	~~+~~
	\vcenter{\hbox{\begin{tikzpicture}[scale=0.5]
		\draw[thick] (0,0) -- (0,0.75);
		\draw[thick,snake it] (0,0.75) -- (0,1.5);
		\draw[black,fill=black] (0,0.75) circle (.75ex);
		\draw[thick,snake it] (0,1.5) -- (-0.75,2.25);
		\draw[thick,densely dashed] (0,1.5) -- (0.75,2.25);
		\draw[thick] (2.25,0) -- (2.25,0.75);
		\draw[thick,snake it] (2.25,0.75) -- (2.25,1.5);
		\draw[black,fill=black] (2.25,0.75) circle (.75ex);
		\draw[thick,snake it] (2.25,1.5) -- (1.5,2.25);
		\draw[thick,densely dashed] (2.25,1.5) -- (3,2.25);
	\end{tikzpicture}}}
	~~+~~
	\vcenter{\hbox{\begin{tikzpicture}[scale=0.5]
		\draw[thick] (0,0) -- (0,0.75);
		\draw[thick,snake it] (0,0.75) -- (0,1.5);
		\draw[black,fill=black] (0,0.75) circle (.75ex);
		\draw[thick,snake it] (0,1.5) -- (-0.75,2.25);
		\draw[thick,densely dashed] (0,1.5) -- (0.75,2.25);
		\draw[thick] (2.25,0) -- (2.25,0.75);
		\draw[thick] (2.25,0.75) -- (1.5,1.5);
		\draw[thick,densely dashed] (2.25,0.75) -- (3,1.5);
		\draw[thick,snake it] (1.5,1.5) -- (1.5,2.25);
		\draw[black,fill=black] (1.5,1.5) circle (.75ex);
	\end{tikzpicture}}}\\[4pt]
\nn	&\quad\quad\quad\quad\quad\quad ~~-~~
	\vcenter{\hbox{\begin{tikzpicture}[scale=0.5]
		\draw[thick] (0,0) -- (0,1.125);
		\draw[thick,snake it] (0,1.125) -- (0,2.25);
		\draw[black,fill=black] (0,1.125) circle (.75ex);
		\draw[thick] (2.25,0) -- (2.25,0.75);
		\draw[thick] (2.25,0.75) -- (1.5,1.5);
		\draw[thick,densely dashed] (2.25,0.75) -- (3,1.5);
		\draw[thick,snake it] (1.5,1.5) -- (1.5,2.25);
		\draw[black,fill=black] (1.5,1.5) circle (.75ex);
		\draw[thick,snake it] (1.5,2.25) -- (0.75,3);
		\draw[thick,densely dashed] (1.5,2.25) -- (2.25,3);
	\end{tikzpicture}}}
	~~+~~
	\vcenter{\hbox{\begin{tikzpicture}[scale=0.5]
		\draw[thick] (0,0) -- (0,1.125);
		\draw[thick,snake it] (0,1.125) -- (0,2.25);
		\draw[black,fill=black] (0,1.125) circle (.75ex);
		\draw[thick] (1.5,0) -- (1.5,0.75);
		\draw[thick] (1.5,0.75) -- (0.75,1.5);
		\draw[thick,densely dashed] (1.5,0.75) -- (2.25,1.5);
		\draw[thick,snake it] (0.75,1.5) -- (0.75,2.25);
		\draw[black,fill=black] (0.75,1.5) circle (.75ex);
		\draw[thick,densely dashed] (2.25,1.5) -- (1.5,2.25);
		\draw[thick,densely dashed] (2.25,1.5) -- (3,2.25);
	\end{tikzpicture}}}
	~~+~~
	\vcenter{\hbox{\begin{tikzpicture}[scale=0.5]
		\draw[thick] (0,0) -- (0,0.75);
		\draw[thick] (0,0.75) -- (-0.75,1.5);
		\draw[thick,densely dashed] (0,0.75) -- (0.75,1.5);
		\draw[thick,snake it] (-0.75,1.5) -- (-0.75,2.25);
		\draw[black,fill=black] (-0.75,1.5) circle (.75ex);
		\draw[thick,snake it] (-0.75,2.25) -- (-1.5,3);
		\draw[thick,densely dashed] (-0.75,2.25) -- (0,3);
		\draw[thick] (1.5,0) -- (1.5,1.125);
		\draw[thick,snake it] (1.5,1.125) -- (1.5,2.25);
		\draw[black,fill=black] (1.5,1.125) circle (.75ex);
	\end{tikzpicture}}}
	~~-~~
	\vcenter{\hbox{\begin{tikzpicture}[scale=0.5]
		\draw[thick] (0,0) -- (0,0.75);
		\draw[thick] (0,0.75) -- (-0.75,1.5);
		\draw[thick,densely dashed] (0,0.75) -- (0.75,1.5);
		\draw[thick,snake it] (-0.75,1.5) -- (-0.75,2.25);
		\draw[black,fill=black] (-0.75,1.5) circle (.75ex);
		\draw[thick,densely dashed] (0.75,1.5) -- (0,2.25);
		\draw[thick,densely dashed] (0.75,1.5) -- (1.5,2.25);
		\draw[thick] (2.25,0) -- (2.25,1.125);
		\draw[thick,snake it] (2.25,1.125) -- (2.25,2.25);
		\draw[black,fill=black] (2.25,1.125) circle (.75ex);
	\end{tikzpicture}}}
	\\[4pt]
\nn	&\quad\quad\quad\quad\quad\quad ~~+~~
	\vcenter{\hbox{\begin{tikzpicture}[scale=0.5]
		\draw[thick] (0,0) -- (0,0.75);
		\draw[thick] (0,0.75) -- (-0.75,1.5);
		\draw[thick,densely dashed] (0,0.75) -- (0.75,1.5);
		\draw[thick,snake it] (-0.75,1.5) -- (-0.75,2.25);
		\draw[black,fill=black] (-0.75,1.5) circle (.75ex);
		\draw[thick] (2.25,0) -- (2.25,0.75);
		\draw[thick,snake it] (2.25,0.75) -- (2.25,1.5);
		\draw[black,fill=black] (2.25,0.75) circle (.75ex);
		\draw[thick,snake it] (2.25,1.5) -- (1.5,2.25);
		\draw[thick,densely dashed] (2.25,1.5) -- (3,2.25);
	\end{tikzpicture}}}
	~~+~~
	\vcenter{\hbox{\begin{tikzpicture}[scale=0.5]
		\draw[thick] (0,0) -- (0,0.75);
		\draw[thick,snake it] (0,0.75) -- (0,1.5);
		\draw[black,fill=black] (0,0.75) circle (.75ex);
		\draw[thick,snake it] (0,1.5) -- (-0.75,2.25);
		\draw[thick,densely dashed] (0,1.5) -- (0.75,2.25);
		\draw[thick] (2.25,0) -- (2.25,0.75);
		\draw[thick,snake it] (2.25,0.75) -- (2.25,1.5);
		\draw[black,fill=black] (2.25,0.75) circle (.75ex);
		\draw[thick,snake it] (2.25,1.5) -- (1.5,2.25);
		\draw[thick,densely dashed] (2.25,1.5) -- (3,2.25);
	\end{tikzpicture}}}
	~~-~~
	\vcenter{\hbox{\begin{tikzpicture}[scale=0.5]
		\draw[thick] (0,0) -- (0,1.125);
		\draw[thick,snake it] (0,1.125) -- (0,2.25);
		\draw[black,fill=black] (0,1.125) circle (.75ex);
		\draw[thick] (2.25,0) -- (2.25,0.75);
		\draw[thick,snake it] (2.25,0.75) -- (2.25,1.5);
		\draw[black,fill=black] (2.25,0.75) circle (.75ex);
		\draw[thick,snake it] (2.25,1.5) -- (1.5,2.25);
		\draw[thick,densely dashed] (2.25,1.5) -- (3,2.25);
		\draw[thick,snake it] (1.5,2.25) -- (0.75,3);
		\draw[thick,densely dashed] (1.5,2.25) -- (2.25,3);
	\end{tikzpicture}}}
	~~+~~
	\vcenter{\hbox{\begin{tikzpicture}[scale=0.5]
		\draw[thick] (0,0) -- (0,1.125);
		\draw[thick,snake it] (0,1.125) -- (0,2.25);
		\draw[black,fill=black] (0,1.125) circle (.75ex);
		\draw[thick] (1.5,0) -- (1.5,0.75);
		\draw[thick,snake it] (1.5,0.75) -- (1.5,1.5);
		\draw[black,fill=black] (1.5,0.75) circle (.75ex);
		\draw[thick,snake it] (1.5,1.5) -- (0.75,2.25);
		\draw[thick,densely dashed] (1.5,1.5) -- (2.25,2.25);
		\draw[thick,densely dashed] (2.25,2.25) -- (1.5,3);
		\draw[thick,densely dashed] (2.25,2.25) -- (3,3);
	\end{tikzpicture}}}
	~~\Bigg)\\[4pt]
	&\quad\quad ~~+~~ \mathcal{O}(\lambda^{3})
\end{flalign}
and the third by
\begin{flalign}
	(\delta H)^3\big(\varphi_1\, \varphi_2\big) \,=~
	\frac{\hbar^{2}\lambda^{2}}{2}~~
	\vcenter{\hbox{\begin{tikzpicture}[scale=0.5]
		\draw[thick] (0,0) -- (0,0.75);
		\draw[thick,snake it] (0,0.75) -- (0,1.5);
		\draw[black,fill=black] (0,0.75) circle (.75ex);
		\draw[thick] (0,1.5) -- (-0.75,2.25);
		\draw[black,fill=black] (-0.75,2.25) circle (.75ex);
		\draw[thick,snake it] (-0.75,2.25) -- (-0.75,3);
		\draw[thick] (-0.75,3) to[out=90,in=90] (0,3);
		\draw[thick] (0,1.5) -- (0,3);
		\draw[thick] (0,1.5) -- (0.75,2.25);
		\draw[thick] (1.5,0) -- (1.5,1.125);
		\draw[thick,snake it] (1.5,1.125) -- (1.5,2.25);
		\draw[black,fill=black] (1.5,1.125) circle (.75ex);
		\draw[thick] (0.75,2.25) to[out=90,in=90] (1.5,2.25);
	\end{tikzpicture}}}
	~~+~~ \mathcal{O}(\lambda^{3})
	\quad.
\end{flalign}
Applying $\Pi$ to these expressions, we 
obtain that the $2$-point function to order $\lambda^{2}$ is given by
\begin{flalign}
	\langle \varphi_1\, \varphi_2\rangle \,=~ \hbar~~
	\vcenter{\hbox{\begin{tikzpicture}[scale=0.5]
		\draw[thick] (0,0) -- (0,0.75);
		\draw[thick,snake it] (0,0.75) -- (0,1.5);
		\draw[black,fill=black] (0,0.75) circle (.75ex);
		\draw[thick] (0.75,0) -- (0.75,1.5);
		\draw[thick] (0,1.5) to[out=90,in=90] (0.75,1.5);
	\end{tikzpicture}}}
	~~+~~
	\frac{\hbar^{2}\lambda^{2}}{2}~~
	\vcenter{\hbox{\begin{tikzpicture}[scale=0.5]
		\draw[thick] (0,0) -- (0,0.75);
		\draw[thick,snake it] (0,0.75) -- (0,1.5);
		\draw[black,fill=black] (0,0.75) circle (.75ex);
		\draw[thick] (0,1.5) -- (-0.75,2.25);
		\draw[black,fill=black] (-0.75,2.25) circle (.75ex);
		\draw[thick,snake it] (-0.75,2.25) -- (-0.75,3);
		\draw[thick] (-0.75,3) to[out=90,in=90] (0,3);
		\draw[thick] (0,1.5) -- (0,3);
		\draw[thick] (0,1.5) -- (0.75,2.25);
		\draw[thick] (1.5,0) -- (1.5,1.125);
		\draw[thick,snake it] (1.5,1.125) -- (1.5,2.25);
		\draw[black,fill=black] (1.5,1.125) circle (.75ex);
		\draw[thick] (0.75,2.25) to[out=90,in=90] (1.5,2.25);
	\end{tikzpicture}}}
	~~+~~ \mathcal{O}(\lambda^{3})
	\quad.
\end{flalign}
Observe that neither the ghosts nor the antifields for ghosts contribute 
to the $2$-point function at order $\lambda^{2}$. 
We will prove below that, for the present
case of perturbations around the zero Dirac operator $D_0=0$,
this statement is true for {\em all} $n$-point functions of degree
zero observables, {\em all} interaction terms and to {\em all} 
orders of the perturbation series.
\end{ex}
\begin{propo}\label{prop:D0perturbation}
Consider an arbitrary $(p,q)$-fermion space over $\AA=\Mat_{N}(\bbC)$
and any gauge-invariant action of the form \eqref{eqn:actionquadratic}.
Then, for perturbations around the zero Dirac operator $D_0=0$, 
all $n$-point quantum correlation functions
$\langle \varphi_1\cdots\varphi_n\rangle$ for degree $0$ observables
$\varphi_1,\dots,\varphi_n\in \LL^0 = \DD^\vee$ do {\em not} 
receive contributions from ghosts and antifields for ghosts.
\end{propo}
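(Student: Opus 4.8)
The plan is to introduce an auxiliary grading on $\Sym(\LL)$ that is preserved by the whole homological-perturbation machinery and that isolates the ghost/ghost-antifield sector. Concretely, I would equip $\Sym(\LL)$ with the grading $N_t$ counting the number of $t_i$-generators occurring in a monomial (each $t_i$ contributing $+1$, every other generator contributing $0$). Since the observables $\varphi_1,\dots,\varphi_n\in\DD^\vee$ are products of the $f^a$ alone, the input state $\varphi_1\cdots\varphi_n$ of the correlation function \eqref{eqn:correlation} sits in $N_t=0$. The goal is then to show that, after applying $\Pi$, the only monomials that survive are scalars, i.e.\ contain neither the ghosts $\theta^i$ nor the ghost-antifields $t_i$ that span $\mathrm{H}^\bullet(\LL,\dd^\fr)=\g[2]\oplus\g^\vee[-1]$.

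The crucial observation is that \emph{none} of the operators entering \eqref{eqn:correlation} ever creates a $t_i$. I would verify this case by case. For the homotopy: by \eqref{eqn:defretractD0} one has $h(f^a)=-S^{ab}e_b$ and $h$ vanishes on all other generators, so the extended homotopy $H$ of \eqref{eqn:bigH}, being a signed sum of single-factor insertions of $h$, only ever trades an $f$ for an $e$ and hence leaves $N_t$ unchanged. For the interaction differential, I inspect $\dd^\intt$ term by term in \eqref{eqn:totaldifferential} (recalling $\beta^a_i=0$ for $D_0=0$): the vertex acting on $t_i$ sends it to $e_a f^b$ or to $t_k\theta^j$, while the vertices acting on $e_a$, $f^a$ and $\theta^i$ produce only $e$-, $f$- and $\theta$-lines; hence $\dd^\intt$ is of $N_t$-degree $\leq 0$. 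Finally, the BV Laplacian $\Delta_{\BV}$ acts as a sum of pairwise contractions of the pairs in \eqref{eqn:P0}, which merely remove generators and never produce any, so it too is $N_t$-non-increasing. Consequently $\delta=\lambda\,\dd^\intt+\hbar\,\Delta_{\BV}$ and every power $(\delta H)^k$ are $N_t$-non-increasing, and since the input has $N_t=0$ while $N_t\geq 0$ always, every monomial appearing in $(\delta H)^k(\varphi_1\cdots\varphi_n)$ has $N_t=0$.

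It then remains to combine this with two standard bookkeeping facts. On the one hand, $\delta$ has cohomological degree $+1$ and $H$ degree $-1$, so $(\delta H)^k$ preserves cohomological degree; together with the degree-$0$ map $\Pi$ this places the entire correlation function in cohomological degree $0$. On the other hand, by \eqref{eqn:bigPiI} and \eqref{eqn:defretractD0} the projection $\Pi$ annihilates every monomial containing a factor $e_a$ or $f^a$, so only monomials built purely from $t_i$ and $\theta^i$ can survive. Such a surviving monomial $t_{i_1}\cdots t_{i_p}\,\theta^{j_1}\cdots\theta^{j_q}$ has cohomological degree $-2p+q$, which must vanish; combined with $N_t=p=0$ from the previous step this forces $q=0$. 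Hence the only surviving contribution is the constant monomial, i.e.\ $\langle\varphi_1\cdots\varphi_n\rangle$ lies in the scalar part $\Sym^0\,\mathrm{H}^\bullet(\LL,\dd^\fr)$ and receives no ghost or ghost-antifield contributions.

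I do not expect any deep obstacle here, since the argument is essentially a charge-conservation statement; the only point requiring genuine care is to confirm that the $N_t$-non-increasing property survives the \emph{symmetric-algebra} extensions of the operators, in particular the combinatorially nontrivial homotopy $H$ of \eqref{eqn:bigH} and the BV Laplacian on arbitrary symmetric powers. Once one records that each acts as a sum of either single-factor insertions or pairwise contractions, no individual summand can raise $N_t$, and the Koszul signs are irrelevant because they never change which generators occur. This is consistent with Example \ref{ex:Dis0}, in which dotted ($t_i$) lines indeed never appear at any stage of the computation.
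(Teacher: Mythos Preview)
Your proposal is correct and follows essentially the same argument as the paper: the paper phrases the key observation graphically (``no dotted lines are ever created by $\delta H$'') while you formalize it as conservation of the auxiliary $N_t$-grading, but the logical content is identical. Your second step (applying $\Pi$ first and then the degree constraint) reorders the paper's version (degree constraint forcing $\#\theta = \#e$, then $\Pi$ killing the $e$'s), but both routes use the same ingredients and arrive at the same conclusion.
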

\begin{proof}
The proof is a simple argument using our graphical calculus.
Starting from $n$ straight lines, representing the element
$\varphi_1\cdots\varphi_n\in \Sym(\LL)$, one observes 
by direct inspection that iterated applications
of $\delta H$ do {\em not} include any 
dotted (antifield for ghost) lines
due to the explicit form of the
interaction vertices \eqref{eqn:interactionverticespicture}
and of the cochain homotopy \eqref{eqn:homotopypicture}.
Because $\delta H$ is of cohomological degree $0$,
the element $(\delta H)^k(\varphi_1\cdots\varphi_n)\in \Sym(\LL)$
is of cohomological degree $0$ too, hence together with the previous
observation it must contain an equal number of dashed (ghost) lines
and wiggly (antifield) lines. Applying the dg-algebra homomorphism
$\Pi$ and using that it gives zero on every wiggly line, it follows
that only those terms with no ghost lines contribute to the correlation function
$\langle \varphi_1\cdots\varphi_n\rangle = \sum_{k=0}^\infty\Pi\big((\delta H)^k(\varphi_1\cdots\varphi_n)\big)$.
This completes the proof.
\end{proof}


\section{\label{sec:examples}Example for $D_0\neq 0$: The quartic $(0,1)$-model}
The aim of this section is to study quantized perturbations
around non-zero Dirac operators $D_0\neq 0$ in the
simplest dynamical fuzzy spectral triple model,
the so-called $(0,1)$-model from \cite{Barrett1}.
From this we can show that, in contrast to the case of 
perturbations around $D_0=0$ as in Proposition \ref{prop:D0perturbation}, 
the quantum correlation functions for perturbations around $D_0\neq 0$ are in general 
sensitive to the ghosts and the antifields for ghosts.
By working through the computational details of this simple model,
it will become evident that this behavior is not accidental,
but it will be present in any other model provided that
one perturbs around a background Dirac operator $D_0$ that 
breaks some of the gauge symmetries. (Note that the zero Dirac operator
$D_0=0$ is gauge invariant under \eqref{eqn:Lieactions}.)

\paragraph{The $(0,1)$-model:} Let us recall from \cite{Barrett1,Barrett3} 
that the $(0,1)$-fermion space
over $\AA=\Mat_{N}^{}(\bbC)$ is given by the Hilbert space
$\HH=\AA$, on which $\AA$ acts via left multiplication, 
with inner product $\cyc{a}{a^\prime}  = \Tr_{\AA}^{}(a^\ast a^\prime)$,
chirality operator $\Gamma(a) = a$ and real structure $\Gamma(a) = a^\ast$.
Furthermore, each Dirac operator (in the sense of Definition \ref{def:Diracoperator})
is of the form 
\begin{flalign}\label{eqn:DviaL}
D \,=\, -\ii\,[L,\,\cdot\,]\quad,
\end{flalign}
where $L\in \Mat_{N}^{}(\bbC) $ is an anti-Hermitian and trace-free $N\times N$-matrix.
Hence, we can identify 
\begin{flalign}
\DD\,\cong\,\su(N)
\end{flalign}
the space of Dirac operators on the $(0,1)$-fermion space with (the underlying real vector space of) 
the Lie algebra $\su(N)$ of anti-Hermitian and trace-free $N\times N$-matrices.
From Remark \ref{rem:gaugeLiealgebra}, we obtain that the Lie algebra
of infinitesimal gauge transformations of this model is given by
\begin{flalign}
\g\,=\,\su(N)\quad,
\end{flalign}
which acts on the space of Dirac operators through the Lie bracket of $\su(N)$, i.e.
\begin{flalign}\label{eqn:Lieaction01}
\rho_{\DD}^{}\, :\,\g\times \DD ~\longrightarrow~\DD~,~~(\epsilon,L)~\longmapsto~\rho_{\DD}^{}(\epsilon)(L)
\,=\,[\epsilon,L]\quad.
\end{flalign}
We consider as in \cite{Barrett3} the gauge-invariant quartic
action $S(D) = \Tr_{\End(\HH)}^{}\big(\tfrac{g_2}{2} D^2 + \tfrac{g_4}{4!} D^4\big) $
and assume that $g_2<0$ is negative and $g_4>0$ is positive in order to obtain a ``symmetry-breaking potential''.
Upon rescaling the Dirac operator $D$, one can assume without loss of generality that $g_2=-1$.
Inserting \eqref{eqn:DviaL} in this action and using that $L$ is trace-free, one can
express $S$ as a function of $L$ and finds
\begin{flalign}\label{eqn:action01}
S(L)\,=\,N\,\Tr_{\AA}^{}(L^2) + \frac{g_4}{4!}\Big(2\,N\,\Tr_{\AA}^{}(L^4) + 6 \,\big(\Tr_{\AA}^{}(L^2)\big)^2\Big)\quad.
\end{flalign}
The variation of this action reads as
\begin{subequations}
\begin{flalign}\label{eqn:variation01}
\delta S(L)\,=\, \Tr_{\AA}^{}\Big[\delta L ~\Big(2\, N\, L + \frac{g_4}{4!}\big(8\, N\, L^3 + 24\, L\,\Tr_{\AA}^{}(L^2)\big)\Big) \Big]\quad,
\end{flalign}
which yields the Euler-Lagrange equation
\begin{flalign}\label{eqn:ELequations01}
L + \frac{g_4}{3!}\Big(L^3 - \frac{1}{N} \,\Tr_{\AA}^{}(L^3) + \frac{3}{N}\, L\,\Tr_{\AA}^{}(L^2)\Big)\,=\,0\quad.
\end{flalign}
\end{subequations}
Note that the term $ \Tr_{\AA}^{}(L^3)$ in \eqref{eqn:ELequations01} arises due to the fact
that the variation $\delta L$ is trace-free, hence the big round bracket
in \eqref{eqn:variation01} must be projected onto the space of trace-free and anti-Hermitian matrices.
\sk

In order to illustrate the main features arising for perturbations around a non-zero Dirac operator,
we will consider the case where $N$ is even and the following simple exact solution
\begin{flalign}\label{eqn:L0background01}
L_0 \,=\, \ii\,\kappa \,\begin{pmatrix}
\oone_{\sfrac{N}{2}}^{}&0\\
0& - \oone_{\sfrac{N}{2}}^{}
\end{pmatrix}\quad,\qquad \kappa\,:=\, \sqrt{\frac{3}{2\,g_4}} 
\end{flalign}
of \eqref{eqn:ELequations01}, which we have written in block matrix notation.
The novel feature of this background solution is that it breaks the $\g=\su(N)$
gauge symmetry down to a Lie sub-algebra. Indeed, using also
a block matrix notation for Lie algebra elements
\begin{flalign}
\epsilon\,=\,\begin{pmatrix}
\epsilon_1 & \epsilon_3\\
-\epsilon_3^\ast &\epsilon_2 
\end{pmatrix}\,\in\,\g \,=\,\su(N)\quad,
\end{flalign}
one finds for the Lie algebra action \eqref{eqn:Lieaction01} on $L_0$ that
\begin{flalign}\label{eqn:gaction01}
\rho_{\DD}^{}(\epsilon)(L_0)\,=\,[\epsilon,L_0] \,=\, -2\,\ii\,\kappa\,\begin{pmatrix}
0 & \epsilon_3\\
\epsilon_3^\ast & 0
\end{pmatrix}\quad.
\end{flalign}
Hence, the $\su(N)$ gauge symmetry is broken down to the Lie sub-algebra 
$\g_0 \subset \g=\su(N)$ whose elements are of the form
\begin{flalign}\label{eqn:subLie01}
\epsilon = \begin{pmatrix}
\epsilon_1&0\\
0 & \epsilon_2
\end{pmatrix}\,\in\,\g_0 \subset \g = \su(N)\quad.
\end{flalign}
For later use, let us note that the linearization around $L_0$ of the Euler-Lagrange
equation \eqref{eqn:ELequations01} reads as
\begin{subequations}\label{eqn:EOMlin01}
\begin{flalign}
P(\tilde{L})\,:=\,-\frac{1}{2} \tilde{L} + \frac{g_4}{3!}\, L_0\,[\tilde{L},L_0] 
+ \frac{g_4}{N} \,L_0\,\Tr_{\AA}^{}(\tilde{L}\,L_0)\,=\,0\quad.
\end{flalign}
In block matrix notation, 
we can write this more explicitly as
\begin{flalign}
P(\tilde{L})  = -\frac{1}{2}\begin{pmatrix}
\tilde{L}_1 & 0\\
0 & \tilde{L}_2
\end{pmatrix} -\frac{3}{2N}\,\Tr(\tilde{L}_1-\tilde{L}_2)\,\begin{pmatrix}
\oone_{\sfrac{N}{2}} & 0\\
0 & -\oone_{\sfrac{N}{2}}
\end{pmatrix}\quad,
\end{flalign}
\end{subequations}
where the unadorned trace $\Tr$ is over $\sfrac{N}{2}\times\sfrac{N}{2}$-matrices.
Furthermore, the perturbative action \eqref{eqn:perturbativeaction} around $L_0$ 
reads as
\begin{flalign}
\nn \tilde{S}(\tilde{L})\,&=\, -\frac{N}{4}\,\Tr_{\AA}^{}(\tilde{L}^2) + \frac{g_4}{4!}\Big(
4\,N\,\Tr_{\AA}^{}(L_0\,\tilde{L}\,L_0\,\tilde{L}) + 24\, \big(\Tr_{\AA}^{}(L_0\,\tilde{L})\big)^2\Big)\\
\nn &\quad +\frac{\lambda \,g_4}{4!}\Big(8\,N\,\Tr_{\AA}^{}(L_0\,\tilde{L}^3) + 24\, \Tr_{\AA}^{}(L_0\,\tilde{L})\,\Tr_{\AA}^{}(\tilde{L}^2)\Big)\\
&\quad + \frac{\lambda^2\,g_4}{4!}\Big(2\,N\,\Tr_{\AA}^{}(\tilde{L}^4) + 6\,\big(\Tr_{\AA}^{}(\tilde{L}^2)\big)^2\Big)\quad. \label{eqn:perturbativeaction01}
\end{flalign}

\paragraph{The complex of linear observables \eqref{eqn:Lfree}:} Recalling 
that $\DD\cong \su(N)$ and $\g=\su(N)$, we can use the Killing form
$2N\,\Tr_\AA^{}(X\,Y)$, for $X,Y\in\su(N)$, to identify
$\DD^{\vee}\cong \su(N)$ and $\g^\vee \cong \su(N)$.
The cochain complex \eqref{eqn:Lfree} of linear observables then 
takes for our $(0,1)$-model the form
\begin{subequations}\label{eqn:lincomplex01}
\begin{flalign}
\LL \,=\,\Big(\xymatrix@C=2.4em{
\su(N)[2] \ar[r]^-{\dd^\fr}& \su(N)[1]\ar[r]^-{\dd^\fr} & \su(N) \ar[r]^-{\dd^\fr} & \su(N)[-1]
}
\Big)\quad.
\end{flalign}
To avoid confusing the different $\su(N)$-components in this complex, we shall use the following
notations
\begin{flalign}
\beta\,\in\, \su(N)[2]~~,\quad \alpha\,\in\,\su(N)[1]~~,\quad
\varphi\,\in\, \su(N)~~,\quad \chi\,\in\,\su(N)[-1]
\end{flalign}
\end{subequations}
and recall that (in the presented order) they are interpreted as linear
observables for the antifield for the ghost $\mathsf{c}^+$, the antifield $\tilde{D}^+$,
the field $\tilde{D}$ and the ghost $\mathsf{c}$. The free differential
can be expressed very efficiently by re-writing \eqref{eqn:Lfree} 
in terms of our block matrix notation (see \eqref{eqn:gaction01} and \eqref{eqn:EOMlin01}), 
which yields
\begin{subequations}\label{eqn:lincomplex01differential}
\begin{flalign}
\dd^{\fr}\beta \,&=\, [\beta,L_0]\,=\, -2\,\ii\,\kappa\,\begin{pmatrix}
0 & \beta_3\\
\beta_3^\ast & 0
\end{pmatrix}\quad,\\
\dd^{\fr}\alpha \,&=\,P(\alpha)\,=\,-\frac{1}{2}\begin{pmatrix}
\alpha_1 & 0\\
0 & \alpha_2
\end{pmatrix} -\frac{3}{2N}\,\Tr(\alpha_1-\alpha_2)\,\begin{pmatrix}
\oone_{\sfrac{N}{2}} & 0\\
0 & -\oone_{\sfrac{N}{2}}
\end{pmatrix}\quad,\\
\dd^{\fr}\varphi\,&=\, [\varphi,L_0]\,=\, - 2\,\ii\,\kappa\,\begin{pmatrix}
0 & \varphi_3\\
\varphi_3^\ast & 0
\end{pmatrix}\quad,\\
\dd^{\fr}\chi\,&=\, 0\quad.
\end{flalign}
\end{subequations}
Note that there is no relative sign between $\dd^{\fr}\beta$ 
and $\dd^{\fr}\varphi$ because the minus sign in \eqref{eqn:Lfree} 
is compensated by the minus sign that comes
from forming adjoints with respect to the Killing form.
(Explicitly, $2N\,\Tr_\AA^{}(X\,[Y,\epsilon]) = -2N\,\Tr_{\AA}^{}([X,\epsilon]\,Y)$.)
\sk

Recalling the Lie sub-algebra $\g_0\subset \g =\su(N)$ introduced in \eqref{eqn:subLie01},
we can further simplify the description of the complex of linear observables given in
\eqref{eqn:lincomplex01} and \eqref{eqn:lincomplex01differential}.
For this let us also introduce the orthogonal complement $\g_0^{\perp}\subset \g=\su(N)$
of $\g_0\subset \g =\su(N)$ with respect to the Killing form. Note that,
in our block matrix notation, the elements of the orthogonal complement are of the form
\begin{flalign}\label{eqn:subLie01complement}
\epsilon = \begin{pmatrix}
0&\epsilon_3\\
-\epsilon_3^\ast & 0
\end{pmatrix}\,\in\,\g_0^\perp \subset \g = \su(N)\quad.
\end{flalign}
Decomposing all four $\su(N)$-components in \eqref{eqn:lincomplex01} 
according to $\su(N) = \g_0 \oplus \g_0^{\perp}$, we
observe that each component of the differential in \eqref{eqn:lincomplex01differential} 
is only non-zero on one of the two summands. This allows us to decompose the complex
\eqref{eqn:lincomplex01} as
\begin{flalign}\label{eqn:lincomplex01simple}
\LL \,=\,
\left(\parbox{3cm}{\xymatrix@R=0.1em@C=2.4em{
\g_0[2] & \g_0[1] \ar[r]^-{\dd^{\fr}}& \g_0 & \g_0[-1]\\
\oplus&\oplus&\oplus&\oplus\\
\g_0^\perp[2] \ar[r]_-{\dd^{\fr}}& \g_0^\perp[1] & \g_0^\perp \ar[r]_-{\dd^{\fr}}& \g_0^\perp[-1]
}}\right)\quad.
\end{flalign}
The displayed non-trivial differentials are all injective 
and hence, via the rank nullity theorem, also surjective on 
the corresponding summands. This implies that the cohomology of this complex
is given by
\begin{flalign}\label{eqn:cohomology01}
\mathrm{H}^\bullet\big(\LL,\dd^{\fr}\big)\,=\, \g_0[2]\oplus \g_{0}[-1]\quad.
\end{flalign}

\paragraph{The strong deformation retract \eqref{eqn:lineardefret}:}
From \eqref{eqn:lincomplex01simple}, we obtain a direct sum decomposition
\begin{subequations}\label{eqn:LLdecomposition01}
\begin{flalign}
\LL \,=\, \LL^\perp\oplus \mathrm{H}^\bullet\big(\LL,\dd^{\fr}\big)
\end{flalign}
into the acyclic complex
\begin{flalign}
\LL^\perp\,=\,\left(\parbox{3cm}{\xymatrix@R=0.1em@C=2.4em{
 & \g_0[1] \ar[r]^-{\dd^{\fr}}& \g_0 & \\
&\oplus&\oplus&\\
\g_0^\perp[2] \ar[r]_-{\dd^{\fr}}& \g_0^\perp[1] & \g_0^\perp \ar[r]_-{\dd^{\fr}}& \g_0^\perp[-1]
}}\right)
\end{flalign}
\end{subequations}
and the cohomology \eqref{eqn:cohomology01}.
The $\pi$-map of the desired strong deformation retract \eqref{eqn:lineardefret}
is then simply given by projecting onto $\mathrm{H}^\bullet(\LL,\dd^{\fr})$
and the $\iota$-map is given by inclusion. A choice of a contracting homotopy 
$h$ can be found by inverting (the non-trivial parts of) $\dd^{\fr}$, yielding
\begin{subequations}\label{eqn:homotopy01}
\begin{flalign}
h(\beta)\,&=\, 0\quad,\\
h(\alpha)\,&=\, -\frac{\ii}{2\, \kappa}\begin{pmatrix}
0 & \alpha_3\\
\alpha_3^\ast & 0
\end{pmatrix}\quad,\\
h(\varphi) \,&=\, 2\begin{pmatrix}
\varphi_{1} & 0\\
0 & \varphi_{2}
\end{pmatrix} -\frac{3}{2N}\,\Tr(\varphi_1-\varphi_2)\,\begin{pmatrix}
\oone_{\sfrac{N}{2}} & 0\\
0 & -\oone_{\sfrac{N}{2}}
\end{pmatrix}\quad,\\
h(\chi) \,&=\, -\frac{\ii}{2\, \kappa}\begin{pmatrix}
0 & \chi_{3}\\
\chi_3^\ast & 0
\end{pmatrix}\quad.
\end{flalign}
\end{subequations}
The relevant properties in \eqref{eqn:defretaxioms} are straightforward to verify.

\paragraph{Quantum correlation functions:} The diagrammatic approach
from Section \ref{sec:D0=0} to compute the quantum correlation 
functions \eqref{eqn:correlation} generalizes with slight adjustments
to perturbations around non-zero Dirac operators. In analogy to \eqref{eqn:linetypes}, there are 
four types of lines, which we further decompose according
to \eqref{eqn:LLdecomposition01} into their $\LL^{\perp}$ and cohomology components.
Graphically, we will write this as
\begin{flalign}
\beta \,=~\vcenter{\hbox{\begin{tikzpicture}[scale=0.5]
\draw[thick,dotted] (0,0) -- (0,1.5);
\end{tikzpicture}}}~=~\vcenter{\hbox{\begin{tikzpicture}[scale=0.5]
\draw[thick,dotted] (0,0) -- (0,1.5) node[midway, left] {\!\!{\footnotesize ${}_{\perp}$}};
\end{tikzpicture}}}~+~
\vcenter{\hbox{\begin{tikzpicture}[scale=0.5]
\draw[thick,dotted] (0,0) -- (0,1.5) node[midway, left] {\!\!{\footnotesize ${}_{0}$}};
\end{tikzpicture}}}\quad,\qquad
\alpha \,=~\vcenter{\hbox{\begin{tikzpicture}[scale=0.5]
\draw[thick,snake it] (0,0) -- (0,1.5);
\end{tikzpicture}}}\quad,\qquad
\varphi \,=~\vcenter{\hbox{\begin{tikzpicture}[scale=0.5]
\draw[thick] (0,0) -- (0,1.5);
\end{tikzpicture}}}\quad,\qquad
\chi \,=~\vcenter{\hbox{\begin{tikzpicture}[scale=0.5]
\draw[thick,densely dashed] (0,0) -- (0,1.5);
\end{tikzpicture}}}
~=~\vcenter{\hbox{\begin{tikzpicture}[scale=0.5]
\draw[thick,densely dashed] (0,0) -- (0,1.5) node[midway, left] {\!\!{\footnotesize ${}_{\perp}$}};
\end{tikzpicture}}}~+~
\vcenter{\hbox{\begin{tikzpicture}[scale=0.5]
\draw[thick,densely dashed] (0,0) -- (0,1.5) node[midway, left] {\!\!{\footnotesize ${}_{0}$}};
\end{tikzpicture}}}\quad .
\end{flalign}
The action of the cochain homotopy $H$ on many lines can be expressed
via \eqref{eqn:bigH} as a sum of actions on the individual lines. 
Note that, by definition, the number $n$ in this formula does {\em not} count the 
cohomology components of the lines, i.e.\ it only counts
the wiggly, straight, $\perp$-dashed and $\perp$-dotted lines.
The homotopy on a single line is given in \eqref{eqn:homotopy01}
and it is only non-zero on the $\LL^{\perp}$-components.
We graphically represent the non-vanishing components of the homotopy as
\begin{flalign}
H\Big(~~\vcenter{\hbox{\begin{tikzpicture}[scale=0.5]
\draw[thick, densely dashed] (0,0) -- (0,1.5) node[midway, left] {\!\!\!\!\!\!{\footnotesize ${}_{\perp}$}};
\end{tikzpicture}}}~~\Big)\,=~
\vcenter{\hbox{\begin{tikzpicture}[scale=0.5]
\draw[thick, densely dashed] (0,0) -- (0,0.75) node[midway, left] {\!\!{\footnotesize ${}_{\perp}$}};
\draw[thick] (0,0.75) -- (0,1.5);
\draw[black,fill=black] (0,0.75) circle (.75ex);
\end{tikzpicture}}}\quad,\qquad
H\Big(~~\vcenter{\hbox{\begin{tikzpicture}[scale=0.5]
\draw[thick] (0,0) -- (0,1.5);
\end{tikzpicture}}}~~\Big)\,=~
\vcenter{\hbox{\begin{tikzpicture}[scale=0.5]
\draw[thick] (0,0) -- (0,0.75);
\draw[thick,snake it] (0,0.75) -- (0,1.5);
\draw[black,fill=black] (0,0.75) circle (.75ex);
\end{tikzpicture}}}\quad,\qquad
H\Big(~~\vcenter{\hbox{\begin{tikzpicture}[scale=0.5]
\draw[thick, snake it] (0,0) -- (0,1.5);
\end{tikzpicture}}}~~\Big)\,=~
\vcenter{\hbox{\begin{tikzpicture}[scale=0.5]
\draw[thick, snake it] (0,0) -- (0,0.75);
\draw[thick, dotted] (0,0.75) -- (0,1.5) node[midway, left] {\!\!{\footnotesize ${}_{\perp}$}};
\draw[black,fill=black] (0,0.75) circle (.75ex);
\end{tikzpicture}}}\quad.
\end{flalign}
The interaction part $\dd^{\intt}$ of the differential for our model is
given by specializing \eqref{eqn:interactionverticespicture}
to the perturbative action \eqref{eqn:perturbativeaction01}. This yields cubic 
and quartic interaction vertices, which we visualize as
\begin{subequations}
\begin{flalign}
\lambda\,\dd^{\intt}\Big(~~\vcenter{\hbox{\begin{tikzpicture}[scale=0.5]
\draw[thick,dotted] (0,0) -- (0,1.5);
\end{tikzpicture}}}~~\Big)\,&=~ \lambda~
\vcenter{\hbox{\begin{tikzpicture}[scale=0.5]
\draw[thick,dotted] (0,0) -- (0,0.75);
\draw[thick,snake it] (0,0.75) -- (-0.75,1.5);
\draw[thick] (0,0.75) -- (0.75,1.5);
\end{tikzpicture}}}~+~\lambda~
\vcenter{\hbox{\begin{tikzpicture}[scale=0.5]
\draw[thick,dotted] (0,0) -- (0,0.75);
\draw[thick,dotted] (0,0.75) -- (-0.75,1.5);
\draw[thick,densely dashed] (0,0.75) -- (0.75,1.5);
\end{tikzpicture}}}\quad,\\[4pt]
\lambda\,\dd^{\intt}\Big(~~\vcenter{\hbox{\begin{tikzpicture}[scale=0.5]
\draw[thick,snake it] (0,0) -- (0,1.5);
\end{tikzpicture}}}~~\Big)\,&=~ \frac{\lambda}{2}~
\vcenter{\hbox{\begin{tikzpicture}[scale=0.5]
\draw[thick,snake it] (0,0) -- (0,0.75);
\draw[thick] (0,0.75) -- (-0.75,1.5);
\draw[thick] (0,0.75) -- (0.75,1.5);
\end{tikzpicture}}}~+~
\frac{\lambda^2}{3!}~
\vcenter{\hbox{\begin{tikzpicture}[scale=0.5]
\draw[thick,snake it] (0,0) -- (0,0.75);
\draw[thick] (0,0.75) -- (-0.75,1.5);
\draw[thick] (0,0.75) -- (0.75,1.5);
\draw[thick] (0,0.75) -- (0,1.5);
\end{tikzpicture}}}~+~
\lambda~
\vcenter{\hbox{\begin{tikzpicture}[scale=0.5]
\draw[thick,snake it] (0,0) -- (0,0.75);
\draw[thick,snake it] (0,0.75) -- (-0.75,1.5);
\draw[thick,densely dashed] (0,0.75) -- (0.75,1.5);
\end{tikzpicture}}}\quad,\\[4pt]
\lambda\,\dd^{\intt}\Big(~~\vcenter{\hbox{\begin{tikzpicture}[scale=0.5]
\draw[thick] (0,0) -- (0,1.5);
\end{tikzpicture}}}~~\Big)\,&=~\lambda~
\vcenter{\hbox{\begin{tikzpicture}[scale=0.5]
\draw[thick] (0,0) -- (0,0.75);
\draw[thick] (0,0.75) -- (-0.75,1.5);
\draw[thick,densely dashed] (0,0.75) -- (0.75,1.5);
\end{tikzpicture}}}\quad,\\[4pt]
\lambda\,\dd^{\intt}\Big(~~\vcenter{\hbox{\begin{tikzpicture}[scale=0.5]
\draw[thick,densely dashed] (0,0) -- (0,1.5);
\end{tikzpicture}}}~~\Big)\,&=~\lambda~
\vcenter{\hbox{\begin{tikzpicture}[scale=0.5]
\draw[thick,densely dashed] (0,0) -- (0,0.75);
\draw[thick, densely dashed] (0,0.75) -- (-0.75,1.5);
\draw[thick,densely dashed] (0,0.75) -- (0.75,1.5);
\end{tikzpicture}}}\quad.
\end{flalign}
\end{subequations}
The numerical values of these interaction vertices with respect 
to a choice of bases are given in \eqref{eqn:totaldifferential}.
The BV Laplacian is precisely the one from Section \ref{sec:D0=0}, which we again visualize as
\begin{flalign}
\hbar\,\Delta_{\BV}\Big(~~\vcenter{\hbox{\begin{tikzpicture}[scale=0.5]
\draw[thick,dotted] (0,0) -- (0,1.5);
\draw[thick,densely dashed] (0.75,0) -- (0.75,1.5);
\end{tikzpicture}}}~~\Big)\,=~ \hbar~~
\vcenter{\hbox{\begin{tikzpicture}[scale=0.5]
\draw[thick,dotted] (0,0) -- (0,1.25);
\draw[thick,densely dashed] (0.75,0) -- (0.75,1.25);
\draw[thick,densely dashed] (0,1.25) to[out=90,in=90] (0.75,1.25);
\end{tikzpicture}}}\quad,\qquad
\hbar\,\Delta_{\BV}\Big(~~\vcenter{\hbox{\begin{tikzpicture}[scale=0.5]
\draw[thick,snake it] (0,0) -- (0,1.5);
\draw[thick] (0.75,0) -- (0.75,1.5);
\end{tikzpicture}}}~~\Big)\,=~ \hbar~~
\vcenter{\hbox{\begin{tikzpicture}[scale=0.5]
\draw[thick,snake it] (0,0) -- (0,1.25);
\draw[thick] (0.75,0) -- (0.75,1.25);
\draw[thick] (0,1.25) to[out=90,in=90] (0.75,1.25);
\end{tikzpicture}}}\quad,
\end{flalign}
and the $\Pi$-map acts only non-trivially on the cohomology components of the lines, i.e.\
\begin{flalign}
\Pi\Big(~~\vcenter{\hbox{\begin{tikzpicture}[scale=0.5]
\draw[thick, dotted] (0,0) -- (0,1.5) node[midway, left] {\!\!\!\!\!\!{\footnotesize ${}_{0}$}};
\end{tikzpicture}}}~~\Big)\,=~
\vcenter{\hbox{\begin{tikzpicture}[scale=0.5]
\draw[thick,dotted] (0,0) -- (0,1.5) node[midway, left] {\!\!{\footnotesize ${}_{0}$}};
\draw[black,fill=white] (0,1.5) circle (.75ex);
\end{tikzpicture}}}\quad,\qquad
\Pi\Big(~~\vcenter{\hbox{\begin{tikzpicture}[scale=0.5]
\draw[thick,densely dashed] (0,0) -- (0,1.5)  node[midway, left] {\!\!\!\!\!\!{\footnotesize ${}_{0}$}};
\end{tikzpicture}}}~~\Big)\,=~
\vcenter{\hbox{\begin{tikzpicture}[scale=0.5]
\draw[thick,densely dashed] (0,0) -- (0,1.5)  node[midway, left] {\!\!{\footnotesize ${}_{0}$}};
\draw[black,fill=white] (0,1.5) circle (.75ex);
\end{tikzpicture}}}\quad.
\end{flalign}
These are all the ingredients one needs to compute the quantum correlation functions.

\begin{ex}\label{ex:1pt}
As a simple illustration of this diagrammatic calculus, let us compute the $1$-point
function 
\begin{flalign}
\langle \varphi\rangle\,=\, \sum_{k=0}^\infty \Pi\big((\delta H)^k(\varphi)\big)\quad,\qquad \varphi\in\LL^0\quad,
\end{flalign}
for a generator in degree $0$ to the lowest non-trivial order in the 
formal parameter $\lambda$. The first step of this iterative process is computing
\begin{flalign}
\nn \delta H(\varphi)\,&=\,\delta\,\Big(~\vcenter{\hbox{\begin{tikzpicture}[scale=0.5]
\draw[thick] (0,0) -- (0,0.75);
\draw[thick,snake it] (0,0.75) -- (0,1.5);
\draw[black,fill=black] (0,0.75) circle (.75ex);
\end{tikzpicture}}}~\Big)~=~
\frac{\lambda}{2}~
\vcenter{\hbox{\begin{tikzpicture}[scale=0.5]
\draw[thick] (0,-0.75) -- (0,0);
\draw[thick,snake it] (0,0) -- (0,0.75);
\draw[black,fill=black] (0,0) circle (.75ex);
\draw[thick] (0,0.75) -- (-0.75,1.5);
\draw[thick] (0,0.75) -- (0.75,1.5);
\end{tikzpicture}}}~+~
\frac{\lambda^2}{3!}~
\vcenter{\hbox{\begin{tikzpicture}[scale=0.5]
\draw[thick] (0,-0.75) -- (0,0);
\draw[thick,snake it] (0,0) -- (0,0.75);
\draw[black,fill=black] (0,0) circle (.75ex);
\draw[thick] (0,0.75) -- (-0.75,1.5);
\draw[thick] (0,0.75) -- (0.75,1.5);
\draw[thick] (0,0.75) -- (0,1.5);
\end{tikzpicture}}}~+~
\lambda~
\vcenter{\hbox{\begin{tikzpicture}[scale=0.5]
\draw[thick] (0,-0.75) -- (0,0);
\draw[thick,snake it] (0,0) -- (0,0.75);
\draw[black,fill=black] (0,0) circle (.75ex);
\draw[thick,snake it] (0,0.75) -- (-0.75,1.5);
\draw[thick,densely dashed] (0,0.75) -- (0.75,1.5);
\end{tikzpicture}}}\\[4pt]
\,&=\,
\frac{\lambda}{2}~
\vcenter{\hbox{\begin{tikzpicture}[scale=0.5]
\draw[thick] (0,-0.75) -- (0,0);
\draw[thick,snake it] (0,0) -- (0,0.75);
\draw[black,fill=black] (0,0) circle (.75ex);
\draw[thick] (0,0.75) -- (-0.75,1.5);
\draw[thick] (0,0.75) -- (0.75,1.5);
\end{tikzpicture}}}~+~
\frac{\lambda^2}{3!}~
\vcenter{\hbox{\begin{tikzpicture}[scale=0.5]
\draw[thick] (0,-0.75) -- (0,0);
\draw[thick,snake it] (0,0) -- (0,0.75);
\draw[black,fill=black] (0,0) circle (.75ex);
\draw[thick] (0,0.75) -- (-0.75,1.5);
\draw[thick] (0,0.75) -- (0.75,1.5);
\draw[thick] (0,0.75) -- (0,1.5);
\end{tikzpicture}}}~+~
\lambda~
\vcenter{\hbox{\begin{tikzpicture}[scale=0.5]
\draw[thick] (0,-0.75) -- (0,0);
\draw[thick,snake it] (0,0) -- (0,0.75);
\draw[black,fill=black] (0,0) circle (.75ex);
\draw[thick,snake it] (0,0.75) -- (-0.75,1.5);
\draw[thick,densely dashed] (0,0.75) -- (0.75,1.5) node[midway, right] {{\footnotesize ${}_{\perp}$}};
\end{tikzpicture}}}~+~
\lambda~
\vcenter{\hbox{\begin{tikzpicture}[scale=0.5]
\draw[thick] (0,-0.75) -- (0,0);
\draw[thick,snake it] (0,0) -- (0,0.75);
\draw[black,fill=black] (0,0) circle (.75ex);
\draw[thick,snake it] (0,0.75) -- (-0.75,1.5);
\draw[thick,densely dashed] (0,0.75) -- (0.75,1.5) node[midway, right] {{\footnotesize ${}_{0}$}};
\end{tikzpicture}}}\quad,
\end{flalign}
where in the second line we have decomposed the dashed line according to \eqref{eqn:LLdecomposition01}.
Applying $\Pi$ to this expression gives $0$, hence we have to go to the next perturbative 
order to see quantum corrections. Using the algebraic property \eqref{eqn:bigH} of the homotopy $H$ on many lines,
we find
\begin{flalign}
\nn (\delta H)^2(\varphi)\,&=\,\delta H\Bigg(\frac{\lambda}{2}~
\vcenter{\hbox{\begin{tikzpicture}[scale=0.5]
\draw[thick] (0,-0.75) -- (0,0);
\draw[thick,snake it] (0,0) -- (0,0.75);
\draw[black,fill=black] (0,0) circle (.75ex);
\draw[thick] (0,0.75) -- (-0.75,1.5);
\draw[thick] (0,0.75) -- (0.75,1.5);
\end{tikzpicture}}}~+~
\frac{\lambda^2}{3!}~
\vcenter{\hbox{\begin{tikzpicture}[scale=0.5]
\draw[thick] (0,-0.75) -- (0,0);
\draw[thick,snake it] (0,0) -- (0,0.75);
\draw[black,fill=black] (0,0) circle (.75ex);
\draw[thick] (0,0.75) -- (-0.75,1.5);
\draw[thick] (0,0.75) -- (0.75,1.5);
\draw[thick] (0,0.75) -- (0,1.5);
\end{tikzpicture}}}~+~
\lambda~
\vcenter{\hbox{\begin{tikzpicture}[scale=0.5]
\draw[thick] (0,-0.75) -- (0,0);
\draw[thick,snake it] (0,0) -- (0,0.75);
\draw[black,fill=black] (0,0) circle (.75ex);
\draw[thick,snake it] (0,0.75) -- (-0.75,1.5);
\draw[thick,densely dashed] (0,0.75) -- (0.75,1.5) node[midway, right] {{\footnotesize ${}_{\perp}$}};
\end{tikzpicture}}}~+~
\lambda~
\vcenter{\hbox{\begin{tikzpicture}[scale=0.5]
\draw[thick] (0,-0.75) -- (0,0);
\draw[thick,snake it] (0,0) -- (0,0.75);
\draw[black,fill=black] (0,0) circle (.75ex);
\draw[thick,snake it] (0,0.75) -- (-0.75,1.5);
\draw[thick,densely dashed] (0,0.75) -- (0.75,1.5) node[midway, right] {{\footnotesize ${}_{0}$}};
\end{tikzpicture}}}\Bigg)\\[4pt]
\nn \,&=\, 
\delta\Bigg(\frac{\lambda}{2}~
\vcenter{\hbox{\begin{tikzpicture}[scale=0.5]
\draw[thick] (0,-0.75) -- (0,0);
\draw[thick,snake it] (0,0) -- (0,0.75);
\draw[black,fill=black] (0,0) circle (.75ex);
\draw[thick] (0,0.75) -- (-0.75,1.5);
\draw[thick] (0,0.75) -- (0.75,1.5);
\draw[thick,snake it] (-0.75,1.5) -- (-0.75,2.25);
\draw[black,fill=black] (-0.75,1.5) circle (.75ex);
\end{tikzpicture}}}~+~
\frac{\lambda^2}{3!}~
\vcenter{\hbox{\begin{tikzpicture}[scale=0.5]
\draw[thick] (0,-0.75) -- (0,0);
\draw[thick,snake it] (0,0) -- (0,0.75);
\draw[black,fill=black] (0,0) circle (.75ex);
\draw[thick] (0,0.75) -- (-0.75,1.5);
\draw[thick] (0,0.75) -- (0.75,1.5);
\draw[thick] (0,0.75) -- (0,1.5);
\draw[thick,snake it] (-0.75,1.5) -- (-0.75,2.25);
\draw[black,fill=black] (-0.75,1.5) circle (.75ex);
\end{tikzpicture}}}~+~
\frac{\lambda}{2}~
\vcenter{\hbox{\begin{tikzpicture}[scale=0.5]
\draw[thick] (0,-0.75) -- (0,0);
\draw[thick,snake it] (0,0) -- (0,0.75);
\draw[black,fill=black] (0,0) circle (.75ex);
\draw[thick,snake it] (0,0.75) -- (-0.75,1.5);
\draw[thick,densely dashed] (0,0.75) -- (0.75,1.5) node[midway, right] {{\footnotesize ${}_{\perp}$}};
\draw[thick,dotted] (-0.75,1.5) -- (-0.75,2.25) node[midway, left] {\!\!{\footnotesize ${}_{\perp}$}};
\draw[black,fill=black] (-0.75,1.5) circle (.75ex);
\end{tikzpicture}}}~-~
\frac{\lambda}{2}~
\vcenter{\hbox{\begin{tikzpicture}[scale=0.5]
\draw[thick] (0,-0.75) -- (0,0);
\draw[thick,snake it] (0,0) -- (0,0.75);
\draw[black,fill=black] (0,0) circle (.75ex);
\draw[thick,snake it] (0,0.75) -- (-0.75,1.5);
\draw[thick,densely dashed] (0,0.75) -- (0.75,1.5) node[midway, right] {{\footnotesize ${}_{\perp}$}};
\draw[thick] (0.75,1.5) -- (0.75,2.25)  ;
\draw[black,fill=black] (0.75,1.5) circle (.75ex);
\end{tikzpicture}}}
~+~
\lambda~
\vcenter{\hbox{\begin{tikzpicture}[scale=0.5]
\draw[thick] (0,-0.75) -- (0,0);
\draw[thick,snake it] (0,0) -- (0,0.75);
\draw[black,fill=black] (0,0) circle (.75ex);
\draw[thick,snake it] (0,0.75) -- (-0.75,1.5);
\draw[thick,densely dashed] (0,0.75) -- (0.75,1.5) node[midway, right] {{\footnotesize ${}_{0}$}};
\draw[thick, dotted] (-0.75,1.5) -- (-0.75,2.25) node[midway, left] {\!{\footnotesize ${}_{\perp}$}};
\draw[black,fill=black] (-0.75,1.5) circle (.75ex);
\end{tikzpicture}}}\Bigg)\\[4pt]
\,&=\, \frac{\hbar\, \lambda}{2}~
\vcenter{\hbox{\begin{tikzpicture}[scale=0.5]
\draw[thick] (0,-0.75) -- (0,0);
\draw[thick,snake it] (0,0) -- (0,0.75);
\draw[black,fill=black] (0,0) circle (.75ex);
\draw[thick] (0,0.75) -- (-0.75,1.5);
\draw[thick] (0,0.75) -- (0.75,1.5);
\draw[thick] (0.75,1.5) -- (0.75,2.25);
\draw[thick,snake it] (-0.75,1.5) -- (-0.75,2.25);
\draw[black,fill=black] (-0.75,1.5) circle (.75ex);
\draw[thick] (-0.75,2.25) to[out=90,in=90] (0.75,2.25);
\end{tikzpicture}}}~+~
\frac{\hbar\,\lambda}{2}~
\vcenter{\hbox{\begin{tikzpicture}[scale=0.5]
\draw[thick] (0,-0.75) -- (0,0);
\draw[thick,snake it] (0,0) -- (0,0.75);
\draw[black,fill=black] (0,0) circle (.75ex);
\draw[thick,snake it] (0,0.75) -- (-0.75,1.5);
\draw[thick,densely dashed] (0,0.75) -- (0.75,1.5) node[midway, right] {{\footnotesize ${}_{\perp}$}};
\draw[thick,densely dashed] (0.75,1.5) -- (0.75,2.25);
\draw[thick,densely dashed] (-0.75,2.25) to[out=90,in=90] (0.75,2.25);
\draw[thick,dotted] (-0.75,1.5) -- (-0.75,2.25) node[midway, left] {\!\!{\footnotesize ${}_{\perp}$}};
\draw[black,fill=black] (-0.75,1.5) circle (.75ex);
\end{tikzpicture}}}~-~
\frac{\hbar\,\lambda}{2}~
\vcenter{\hbox{\begin{tikzpicture}[scale=0.5]
\draw[thick] (0,-0.75) -- (0,0);
\draw[thick,snake it] (0,0) -- (0,0.75);
\draw[black,fill=black] (0,0) circle (.75ex);
\draw[thick,snake it] (0,0.75) -- (-0.75,1.5);
\draw[thick,snake it] (-0.75,1.5) -- (-0.75,2.25);
\draw[thick,densely dashed] (0,0.75) -- (0.75,1.5) node[midway, right] {{\footnotesize ${}_{\perp}$}};
\draw[thick] (0.75,1.5) -- (0.75,2.25)  ;
\draw[black,fill=black] (0.75,1.5) circle (.75ex);
\draw[thick] (-0.75,2.25) to[out=90,in=90] (0.75,2.25);
\end{tikzpicture}}}
~+~
\hbar\,\lambda~
\vcenter{\hbox{\begin{tikzpicture}[scale=0.5]
\draw[thick] (0,-0.75) -- (0,0);
\draw[thick,snake it] (0,0) -- (0,0.75);
\draw[black,fill=black] (0,0) circle (.75ex);
\draw[thick,snake it] (0,0.75) -- (-0.75,1.5);
\draw[thick,densely dashed] (0,0.75) -- (0.75,1.5) node[midway, right] {{\footnotesize ${}_{0}$}};
\draw[thick,densely dashed] (0.75,1.5) -- (0.75,2.25);
\draw[thick, dotted] (-0.75,1.5) -- (-0.75,2.25) node[midway, left] {\!{\footnotesize ${}_{\perp}$}};
\draw[black,fill=black] (-0.75,1.5) circle (.75ex);
\draw[thick,densely dashed] (-0.75,2.25) to[out=90,in=90] (0.75,2.25);
\end{tikzpicture}}}~+~\mathcal{O}(\lambda^2)\quad.
\end{flalign}
Note that the fourth term gives zero because the decomposition 
$\g = \g_0^{\perp} \oplus \g_{0}$ is orthogonal with respect to the Killing form.
By a similar identity as in \eqref{eqn:homotopypull}, namely
\begin{flalign}
	\vcenter{\hbox{\begin{tikzpicture}[scale=0.5]
		\draw[thick,snake it] (0,0) -- (0,0.75);
		\draw[thick,dotted] (0,0.75) -- (0,1.5) node[midway, left] {\!{\footnotesize ${}_{\perp}$}};
		\draw[black,fill=black] (0,0.75) circle (.75ex);
		\draw[thick, densely dashed] (0.75,0) -- (0.75,1.5) node[midway, right] {\!{\footnotesize ${}_{\perp}$}};
		\draw[thick, densely dashed] (0,1.5) to[out=90,in=90] (0.75,1.5);
	\end{tikzpicture}}}
	~=\, \Delta_{\BV}\big(h(\alpha)\, \chi\big)
	\,=\, \{h(\alpha),\chi\} \,=\, \{\alpha ,h(\chi)\}
	\,=\, -\Delta_{\BV}\big(\alpha\, h(\chi)\big) \,=\,- ~
	\vcenter{\hbox{\begin{tikzpicture}[scale=0.5]
		\draw[thick,snake it] (0,0) -- (0,1.5);
		\draw[thick] (0,1.5) to[out=90,in=90] (0.75,1.5);	
		\draw[thick,densely dashed] (0.75,0) -- (0.75,0.75) node[midway, right] {\!{\footnotesize ${}_{\perp}$}};
		\draw[thick] (0.75,0.75) -- (0.75,1.5);
		\draw[black,fill=black] (0.75,0.75) circle (.75ex);
	\end{tikzpicture}}}\quad,
\end{flalign}
one finds that the second and the third term coincide. 
Hence, the $1$-point function is given by
\begin{flalign}
\langle \varphi\rangle\,=\, \frac{\hbar\, \lambda}{2}~
\vcenter{\hbox{\begin{tikzpicture}[scale=0.5]
\draw[thick] (0,-0.75) -- (0,0);
\draw[thick,snake it] (0,0) -- (0,0.75);
\draw[black,fill=black] (0,0) circle (.75ex);
\draw[thick] (0,0.75) -- (-0.75,1.5);
\draw[thick] (0,0.75) -- (0.75,1.5);
\draw[thick] (0.75,1.5) -- (0.75,2.25);
\draw[thick,snake it] (-0.75,1.5) -- (-0.75,2.25);
\draw[black,fill=black] (-0.75,1.5) circle (.75ex);
\draw[thick] (-0.75,2.25) to[out=90,in=90] (0.75,2.25);
\end{tikzpicture}}}~+~
\hbar\,\lambda~
\vcenter{\hbox{\begin{tikzpicture}[scale=0.5]
\draw[thick] (0,-0.75) -- (0,0);
\draw[thick,snake it] (0,0) -- (0,0.75);
\draw[black,fill=black] (0,0) circle (.75ex);
\draw[thick,snake it] (0,0.75) -- (-0.75,1.5);
\draw[thick,densely dashed] (0,0.75) -- (0.75,1.5) node[midway, right] {{\footnotesize ${}_{\perp}$}};
\draw[thick,densely dashed] (0.75,1.5) -- (0.75,2.25);
\draw[thick,densely dashed] (-0.75,2.25) to[out=90,in=90] (0.75,2.25);
\draw[thick,dotted] (-0.75,1.5) -- (-0.75,2.25) node[midway, left] {\!\!{\footnotesize ${}_{\perp}$}};
\draw[black,fill=black] (-0.75,1.5) circle (.75ex);
\end{tikzpicture}}}~+~\mathcal{O}(\lambda^2)\quad.
\end{flalign}
Using the explicit form of the interaction part of the differential (see \eqref{eqn:totaldifferential}),
one can compute the numerical value of the ghost field contribution and one finds
\begin{flalign}\label{eqn:ghostloop}
\hbar\,\lambda~
\vcenter{\hbox{\begin{tikzpicture}[scale=0.5]
\draw[thick] (0,-0.75) -- (0,0);
\draw[thick,snake it] (0,0) -- (0,0.75);
\draw[black,fill=black] (0,0) circle (.75ex);
\draw[thick,snake it] (0,0.75) -- (-0.75,1.5);
\draw[thick,densely dashed] (0,0.75) -- (0.75,1.5) node[midway, right] {{\footnotesize ${}_{\perp}$}};
\draw[thick,densely dashed] (0.75,1.5) -- (0.75,2.25);
\draw[thick,densely dashed] (-0.75,2.25) to[out=90,in=90] (0.75,2.25);
\draw[thick,dotted] (-0.75,1.5) -- (-0.75,2.25) node[midway, left] {\!\!{\footnotesize ${}_{\perp}$}};
\draw[black,fill=black] (-0.75,1.5) circle (.75ex);
\end{tikzpicture}}}~=~\hbar\,\lambda \,\sum_{i=1}^{\dim(\g_0^\perp)}
2\,N\,\Tr_{\AA}\Big( t_i \, h\big([t_i,h(\varphi)]\big)\Big)\quad,
\end{flalign}
where $h$ are the components of the homotopy in \eqref{eqn:homotopy01}, $[\,\cdot\,,\,\cdot\,]$ is the Lie bracket
on $\g=\su(N)$ and $\{t_i\in \g_0^\perp\}$ is an orthogonal basis with respect to the Killing form,
i.e.\ $2N \,\Tr_{\AA}(t_i\,t_j) = - \delta_{ij}$ with the minus sign being a consequence of the Killing form 
being negative definite.
\sk

This ghost field contribution is in general non-zero, as one can simply show 
by setting $N=2$. As orthogonal basis for $\g=\su(2)$ we take the 
(appropriately normalized and anti-Hermitian) Pauli matrices
\begin{flalign}
t_i \,=\,\frac{\ii}{\sqrt{8}}\,\sigma_i~~,\quad\text{for }i=1,2,3\quad,
\end{flalign}
which satisfy the Lie bracket relations $[t_i,t_j] = -\frac{1}{\sqrt{2}}\,\epsilon_{ijk}\,t_k$.
The Lie sub-algebra $\g_0\subset \g$ is spanned by $t_3$
and its orthogonal complement $\g_0^\perp$ is spanned by $t_1$ and $t_2$.
Because the homotopy $h(\varphi)$ in \eqref{eqn:ghostloop} maps surjectively onto $\g_0$,
we can set without loss of generality $h(\varphi)=t_3$ by choosing an appropriate $\varphi$. 
The second homotopy in  \eqref{eqn:ghostloop} is of the type $h(\alpha)$ from \eqref{eqn:homotopy01}
and it acts on the basis of $\g_0^\perp$ as
\begin{flalign}
h(t_1)\,=\,\frac{1}{2\kappa}\,t_2~~,\quad
h(t_2)\,=\, -\frac{1}{2\kappa}\,t_1\quad.
\end{flalign}
This allows us to compute
\begin{flalign}
\eqref{eqn:ghostloop} \,\stackrel{N=2}{=}\, \hbar\,\lambda
\,4\,\Tr_{\AA}\Big(t_1\, h\big([t_1,t_3]\big) + t_2\,h\big([t_2,t_3]\big)\Big)\,=\, 
\frac{\hbar\,\lambda}{\sqrt{2}\,\kappa}\,=\,\hbar\,\lambda\,\sqrt{\frac{g_4}{3}}\quad,
\end{flalign}
hence the ghost field contribution does not vanish.
\end{ex}

\begin{ex}\label{ex:Dnontriv2pt}
With some more computational efforts, one can also
compute the $2$-point correlation function
\begin{flalign}
\langle \varphi_1\, \varphi_2\rangle \,=\,
\sum_{k=0}^\infty \Pi\big((\delta H)^k\big(\varphi_1\, \varphi_2\big)\big)
\quad, \quad\quad \varphi_{1},\varphi_{2} \in \LL^{0} = \DD^\vee\quad,
\end{flalign}
for two generators in degree $0$ to the 
lowest non-trivial order in the formal parameter $\lambda$. 
The final result is
\begin{flalign}
	\nn
	\langle \varphi_1\, \varphi_2\rangle \,&=~ \hbar~~
	\vcenter{\hbox{\begin{tikzpicture}[scale=0.5]
		\draw[thick] (0,0) -- (0,0.75);
		\draw[thick,snake it] (0,0.75) -- (0,1.5);
		\draw[black,fill=black] (0,0.75) circle (.75ex);
		\draw[thick] (0.75,0) -- (0.75,1.5);
		\draw[thick] (0,1.5) to[out=90,in=90] (0.75,1.5);
	\end{tikzpicture}}}
	~~+~~
	\hbar^{2}\lambda^{2}\,\Bigg(\frac{1}{2}~~
	\vcenter{\hbox{\begin{tikzpicture}[scale=0.5]
		\draw[thick] (0,0) -- (0,0.75);
		\draw[thick,snake it] (0,0.75) -- (0,1.5);
		\draw[black,fill=black] (0,0.75) circle (.75ex);
		\draw[thick] (0,1.5) -- (-0.75,2.25);
		\draw[black,fill=black] (-0.75,2.25) circle (.75ex);
		\draw[thick,snake it] (-0.75,2.25) -- (-0.75,3);
		\draw[thick] (-0.75,3) to[out=90,in=90] (0,3);
		\draw[thick] (0,1.5) -- (0,3);
		\draw[thick] (0,1.5) -- (0.75,2.25);
		\draw[thick] (1.5,0) -- (1.5,1.125);
		\draw[thick,snake it] (1.5,1.125) -- (1.5,2.25);
		\draw[black,fill=black] (1.5,1.125) circle (.75ex);
		\draw[thick] (0.75,2.25) to[out=90,in=90] (1.5,2.25);
	\end{tikzpicture}}}
	~~+~~
	\frac{1}{2}~~
	\vcenter{\hbox{\begin{tikzpicture}[scale=0.4]
		\draw[thick] (0,0) -- (0,0.75);
		\draw[thick,snake it] (0,0.75) -- (0,1.5);
		\draw[black,fill=black] (0,0.75) circle (.75ex);
		\draw[thick] (0,1.5) -- (-0.75,2.25);
		\draw[black,fill=black] (-0.75,2.25) circle (.75ex);
		\draw[thick,snake it] (-0.75,2.25) -- (-0.75,3);
		\draw[thick] (0,1.5) -- (0.75,2.25);
		\draw[thick] (-0.75,3) -- (-1.5,3.75);
		\draw[thick] (-0.75,3) -- (0,3.75);
		\draw[black,fill=black] (-1.5,3.75) circle (.75ex);
		\draw[thick,snake it] (-1.5,3.75) -- (-1.5,4.5);
		\draw[thick] (0,3.75) -- (0,4.5);
		\draw[thick] (-1.5,4.5) to[out=90,in=90] (0,4.5);
		\draw[thick] (1.5,0) -- (1.5,1.125);
		\draw[thick,snake it] (1.5,1.125) -- (1.5,2.25);
		\draw[black,fill=black] (1.5,1.125) circle (.75ex);
		\draw[thick] (0.75,2.25) to[out=90,in=90] (1.5,2.25);
	\end{tikzpicture}}}
	~~+~~
	\frac{1}{2}~~
	\vcenter{\hbox{\begin{tikzpicture}[scale=0.5]
		\draw[thick] (0,0) -- (0,0.75);
		\draw[thick,snake it] (0,0.75) -- (0,1.5);
		\draw[black,fill=black] (0,0.75) circle (.75ex);
		\draw[thick] (0,1.5) -- (-0.75,2.25);
		\draw[black,fill=black] (-0.75,2.25) circle (.75ex);
		\draw[thick,snake it] (-0.75,2.25) -- (-0.75,3);
		\draw[thick] (0,1.5) -- (0.75,2.25);
		\draw[black,fill=black] (0.75,2.25) circle (.75ex);
		\draw[thick,snake it] (0.75,2.25) -- (0.75,3);
		\draw[thick] (2.25,0) -- (2.25,0.75);
		\draw[thick,snake it] (2.25,0.75) -- (2.25,1.5);
		\draw[black,fill=black] (2.25,0.75) circle (.75ex);
		\draw[thick] (0.75,3) to[out=90,in=90] (1.5,3);
		\draw[thick] (2.25,1.5) -- (1.5,2.25);
		\draw[thick] (2.25,1.5) -- (3,2.25);
		\draw[thick] (1.5,2.25) -- (1.5,3);
		\draw[thick] (3,2.25) -- (3,3);
		\draw[thick] (-0.75,3) to[out=90,in=90] (3,3);
	\end{tikzpicture}}}\\
	\nn
	&\quad~~+~~
	\vcenter{\hbox{\begin{tikzpicture}[scale=0.4]
		\draw[thick] (0,0) -- (0,0.75);
		\draw[thick,snake it] (0,0.75) -- (0,1.5);
		\draw[black,fill=black] (0,0.75) circle (.75ex);
		\draw[thick] (0,1.5) -- (-0.75,2.25);
		\draw[black,fill=black] (-0.75,2.25) circle (.75ex);
		\draw[thick,snake it] (-0.75,2.25) -- (-0.75,3);
		\draw[thick] (0,1.5) -- (0.75,2.25);
		\draw[thick,snake it] (-0.75,3) -- (-1.5,3.75);
		\draw[thick,densely dashed] (-0.75,3) -- (0,3.75) node[midway, right] {{\footnotesize ${}_{\perp}$}};
		\draw[black,fill=black] (-1.5,3.75) circle (.75ex);
		\draw[thick,dotted] (-1.5,3.75) -- (-1.5,4.5) node[midway, left] {\!\!\!{\footnotesize ${}_{\perp}$}\!\!};
		\draw[thick,densely dashed] (0,3.75) -- (0,4.5);
		\draw[thick,densely dashed] (-1.5,4.5) to[out=90,in=90] (0,4.5);
		\draw[thick] (1.5,0) -- (1.5,1.125);
		\draw[thick,snake it] (1.5,1.125) -- (1.5,2.25);
		\draw[black,fill=black] (1.5,1.125) circle (.75ex);
		\draw[thick] (0.75,2.25) to[out=90,in=90] (1.5,2.25);
	\end{tikzpicture}}}
	~~+~~
	\vcenter{\hbox{\begin{tikzpicture}[scale=0.5]
		\draw[thick] (0,0) -- (0,0.75);
		\draw[thick,snake it] (0,0.75) -- (0,1.5);
		\draw[black,fill=black] (0,0.75) circle (.75ex);
		\draw[thick,snake it] (0,1.5) -- (-0.75,2.25);
		\draw[black,fill=black] (-0.75,2.25) circle (.75ex);
		\draw[thick,dotted] (-0.75,2.25) -- (-0.75,3) node[midway, left] {\!\!\!{\footnotesize ${}_{\perp}$}\!\!};
		\draw[thick,densely dashed] (0,1.5) -- (0.75,2.25) node[midway, right] {{\footnotesize ${}_{\perp}$}};
		\draw[black,fill=black] (0.75,2.25) circle (.75ex);
		\draw[thick] (0.75,2.25) -- (0.75,3);
		\draw[thick] (2.25,0) -- (2.25,0.75);
		\draw[thick,snake it] (2.25,0.75) -- (2.25,1.5);
		\draw[black,fill=black] (2.25,0.75) circle (.75ex);
		\draw[thick] (0.75,3) to[out=90,in=90] (1.5,3);
		\draw[thick,snake it] (2.25,1.5) -- (1.5,2.25);
		\draw[thick,densely dashed] (2.25,1.5) -- (3,2.25) node[midway, right] {{\footnotesize ${}_{\perp}$}};
		\draw[thick,snake it] (1.5,2.25) -- (1.5,3);
		\draw[thick,densely dashed] (3,2.25) -- (3,3);
		\draw[thick,densely dashed] (-0.75,3) to[out=90,in=90] (3,3);
	\end{tikzpicture}}}
	~~+~~
	\frac{1}{4}~~
	\vcenter{\hbox{\begin{tikzpicture}[scale=0.5]
		\draw[thick] (0,-0.75) -- (0,0);
		\draw[thick,snake it] (0,0) -- (0,0.75);
		\draw[black,fill=black] (0,0) circle (.75ex);
		\draw[thick] (0,0.75) -- (-0.75,1.5);
		\draw[thick] (0,0.75) -- (0.75,1.5);
		\draw[thick] (0.75,1.5) -- (0.75,2.25);
		\draw[thick,snake it] (-0.75,1.5) -- (-0.75,2.25);
		\draw[black,fill=black] (-0.75,1.5) circle (.75ex);
		\draw[thick] (-0.75,2.25) to[out=90,in=90] (0.75,2.25);
		\draw[thick] (2.25,-0.75) -- (2.25,0);
		\draw[thick,snake it] (2.25,0) -- (2.25,0.75);
		\draw[black,fill=black] (2.25,0) circle (.75ex);
		\draw[thick] (2.25,0.75) -- (1.5,1.5);
		\draw[thick] (2.25,0.75) -- (3,1.5);
		\draw[thick] (3,1.5) -- (3,2.25);
		\draw[thick,snake it] (1.5,1.5) -- (1.5,2.25);
		\draw[black,fill=black] (1.5,1.5) circle (.75ex);
		\draw[thick] (1.5,2.25) to[out=90,in=90] (3,2.25);
	\end{tikzpicture}}}
	\\
	&\quad~~+~~
	\frac{1}{2}~~
	\vcenter{\hbox{\begin{tikzpicture}[scale=0.5]
		\draw[thick] (0,-0.75) -- (0,0);
		\draw[thick,snake it] (0,0) -- (0,0.75);
		\draw[black,fill=black] (0,0) circle (.75ex);
		\draw[thick] (0,0.75) -- (-0.75,1.5);
		\draw[thick] (0,0.75) -- (0.75,1.5);
		\draw[thick] (0.75,1.5) -- (0.75,2.25);
		\draw[thick,snake it] (-0.75,1.5) -- (-0.75,2.25);
		\draw[black,fill=black] (-0.75,1.5) circle (.75ex);
		\draw[thick] (-0.75,2.25) to[out=90,in=90] (0.75,2.25);
		\draw[thick] (2.25,-0.75) -- (2.25,0);
		\draw[thick,snake it] (2.25,0) -- (2.25,0.75);
		\draw[black,fill=black] (2.25,0) circle (.75ex);
		\draw[thick,snake it] (2.25,0.75) -- (1.5,1.5);
		\draw[thick,densely dashed] (2.25,0.75) -- (3,1.5) node[midway, right] {{\footnotesize ${}_{\perp}$}};
		\draw[thick,densely dashed] (3,1.5) -- (3,2.25);
		\draw[thick,dotted] (1.5,1.5) -- (1.5,2.25) node[midway, left] {{\footnotesize ${}_{\perp}$}\!\!};
		\draw[black,fill=black] (1.5,1.5) circle (.75ex);
		\draw[thick,densely dashed] (1.5,2.25) to[out=90,in=90] (3,2.25);
	\end{tikzpicture}}}
	~~+~~
	\frac{1}{2}~~
	\vcenter{\hbox{\begin{tikzpicture}[scale=0.5]
		\draw[thick] (0,-0.75) -- (0,0);
		\draw[thick,snake it] (0,0) -- (0,0.75);
		\draw[black,fill=black] (0,0) circle (.75ex);
		\draw[thick,snake it] (0,0.75) -- (-0.75,1.5);
		\draw[thick,densely dashed] (0,0.75) -- (0.75,1.5) node[midway, right] {{\footnotesize ${}_{\perp}$}};
		\draw[thick,densely dashed] (0.75,1.5) -- (0.75,2.25);
		\draw[thick,dotted] (-0.75,1.5) -- (-0.75,2.25) node[midway, left] {\!\!\!{\footnotesize ${}_{\perp}$}\!\!};
		\draw[black,fill=black] (-0.75,1.5) circle (.75ex);
		\draw[thick,densely dashed] (-0.75,2.25) to[out=90,in=90] (0.75,2.25);
		\draw[thick] (2.25,-0.75) -- (2.25,0);
		\draw[thick,snake it] (2.25,0) -- (2.25,0.75);
		\draw[black,fill=black] (2.25,0) circle (.75ex);
		\draw[thick] (2.25,0.75) -- (1.5,1.5);
		\draw[thick] (2.25,0.75) -- (3,1.5);
		\draw[thick] (3,1.5) -- (3,2.25);
		\draw[thick,snake it] (1.5,1.5) -- (1.5,2.25);
		\draw[black,fill=black] (1.5,1.5) circle (.75ex);
		\draw[thick] (1.5,2.25) to[out=90,in=90] (3,2.25);
	\end{tikzpicture}}}
	~~+~~
	\vcenter{\hbox{\begin{tikzpicture}[scale=0.5]
		\draw[thick] (0,-0.75) -- (0,0);
		\draw[thick,snake it] (0,0) -- (0,0.75);
		\draw[black,fill=black] (0,0) circle (.75ex);
		\draw[thick,snake it] (0,0.75) -- (-0.75,1.5);
		\draw[thick,densely dashed] (0,0.75) -- (0.75,1.5) node[midway, right] {{\footnotesize ${}_{\perp}$}};
		\draw[thick,densely dashed] (0.75,1.5) -- (0.75,2.25);
		\draw[thick,dotted] (-0.75,1.5) -- (-0.75,2.25) node[midway, left] {\!\!\!{\footnotesize ${}_{\perp}$}\!\!};
		\draw[black,fill=black] (-0.75,1.5) circle (.75ex);
		\draw[thick,densely dashed] (-0.75,2.25) to[out=90,in=90] (0.75,2.25);
		\draw[thick] (2.25,-0.75) -- (2.25,0);
		\draw[thick,snake it] (2.25,0) -- (2.25,0.75);
		\draw[black,fill=black] (2.25,0) circle (.75ex);
		\draw[thick,snake it] (2.25,0.75) -- (1.5,1.5);
		\draw[thick,densely dashed] (2.25,0.75) -- (3,1.5) node[midway, right] {{\footnotesize ${}_{\perp}$}};
		\draw[thick,densely dashed] (3,1.5) -- (3,2.25);
		\draw[thick,dotted] (1.5,1.5) -- (1.5,2.25) node[midway, left] {{\footnotesize ${}_{\perp}$}\!\!};
		\draw[black,fill=black] (1.5,1.5) circle (.75ex);
		\draw[thick,densely dashed] (1.5,2.25) to[out=90,in=90] (3,2.25);
	\end{tikzpicture}}}\Bigg) ~+~ \mathcal{O}(\lambda^{3}) \quad.
\end{flalign}
As in the case of the $1$-point function from the previous Example \ref{ex:1pt}, 
we observe that there are non-trivial ghost field contributions that are not present
for perturbations around the trivial Dirac operator $D_0=0$, see also Example \ref{ex:Dis0}.
\end{ex}


\section*{Acknowledgments}
We would like to thank John Barrett for helpful discussions about
fuzzy spectral triples. We also would like to thank the anonymous referees
for useful comments that helped us to improve the paper.
H.N.\ is supported by a PhD Scholarship from the School of Mathematical 
Sciences of the University of Nottingham. 
A.S.\ gratefully acknowledges the financial support of 
the Royal Society (UK) through a Royal Society University 
Research Fellowship (URF\textbackslash R\textbackslash 211015)
and the Enhancement Awards (RGF\textbackslash EA\textbackslash 180270, 
RGF\textbackslash EA\textbackslash 201051 and RF\textbackslash ERE\textbackslash 210053).



\begin{thebibliography}{10}

\bibitem[AC09]{AschieriCastellani}
P.~Aschieri and L.~Castellani,
``Noncommutative $D=4$ gravity coupled to fermions,''
JHEP \textbf{06}, 086 (2009)
[arXiv:0902.3817 [hep-th]].


\bibitem[ADMW06]{AschieriMetric}
P.~Aschieri, M.~Dimitrijevic, F.~Meyer and J.~Wess,
``Noncommutative geometry and gravity,''
Class.\ Quant.\ Grav.\ \textbf{23}, 1883--1912 (2006)
[arXiv:hep-th/0510059 [hep-th]].


\bibitem[AK19]{Khalkhali1}
S.~Azarfar and M.~Khalkhali,
``Random finite noncommutative geometries and topological recursion,''
arXiv:1906.09362 [math-ph].


\bibitem[Bar15]{Barrett1}
J.~W.~Barrett,
``Matrix geometries and fuzzy spaces as finite spectral triples,''
J.\ Math.\ Phys.\ \textbf{56}, no.\ 8, 082301 (2015)
[arXiv:1502.05383 [math-ph]].


\bibitem[BDG19]{Barrett2}
J.~W.~Barrett, P.~Druce and L.~Glaser,
``Spectral estimators for finite non-commutative geometries,''
J.\ Phys.\ A \textbf{52}, no.\ 27, 275203 (2019)
[arXiv:1902.03590 [gr-qc]].


\bibitem[BG19]{BarrettGaunt}
J.~W.~Barrett and J.~Gaunt,
``Finite spectral triples for the fuzzy torus,''
arXiv:1908.06796 [math.QA].


\bibitem[BG16]{Barrett3}
J.~W.~Barrett and L.~Glaser,
``Monte Carlo simulations of random non-commutative geometries,''
J.\ Phys.\ A \textbf{49}, no.\ 24, 245001 (2016)
[arXiv:1510.01377 [gr-qc]].


\bibitem[BV81]{BV}
I.~A.~Batalin and G.~A.~Vilkovisky, 
``Gauge Algebra and Quantization,'' 
Phys.\ Lett.\ B {\bf 102}, 27--31 (1981).


\bibitem[BM20]{BeggsMajid}
E.~Beggs and S.~Majid, 
{\it Quantum Riemannian Geometry}, 
Grundlehren der mathematischen Wissenschaften {\bf 355}, 
Springer Verlag (2020).


\bibitem[BSS21]{BSSderived}
M.~Benini, P.~Safronov and A.~Schenkel,
``Classical BV formalism for group actions,''
{\it to appear in Communications in Contemporary Mathematics}
[arXiv:2104.14886 [math-ph]].


\bibitem[CC97]{SpectralAction2}
A.~H.~Chamseddine and A.~Connes,
``The Spectral action principle,''
Commun.\ Math.\ Phys.\ \textbf{186}, 731--750 (1997)
[arXiv:hep-th/9606001 [hep-th]].


\bibitem[Con94]{Connes}
A.~Connes,
{\it Noncommutative geometry}, 
Academic Press, Inc., San Diego, CA (1994).


\bibitem[Con96]{SpectralAction1}
A.~Connes,
``Gravity coupled with matter and foundation of noncommutative geometry,''
Commun.\ Math.\ Phys.\ \textbf{182}, 155--176 (1996)
[arXiv:hep-th/9603053 [hep-th]].


\bibitem[CG16]{CostelloGwilliam}
K.~Costello and O.~Gwilliam,
{\it Factorization Algebras in Quantum Field Theory: Volume 1},
Cambridge University Press (2016).


\bibitem[CG21]{CostelloGwilliam2}
K.~Costello and O.~Gwilliam,
{\it Factorization Algebras in Quantum Field Theory: Volume 2},
Cambridge University Press (2021).


\bibitem[Cra04]{Cra04}
M.~Crainic,
``On the perturbation lemma, and deformations,''
\href{https://arxiv.org/abs/math/0403266}{arXiv:math.AT/0403266}.


\bibitem[GGHZ22]{LargeN1}
G.~Ginot, O.~Gwilliam, A.~Hamilton and M.~Zeinalian,
``Large N phenomena and quantization of the Loday-Quillen-Tsygan theorem,''
Advances in Mathematics {\bf 409}, 108631 (2022)
[arXiv:2108.12109 [math.QA]].


\bibitem[Gwi12]{Gwilliam}
O.~Gwilliam,
{\it Factorization algebras and free field theories},
PhD thesis, Northwestern University (2012).
Available at \url{https://people.math.umass.edu/~gwilliam/thesis.pdf}.


\bibitem[GHZ22]{LargeN2}
O.~Gwilliam, A.~Hamilton and M.~Zeinalian,
``A homological approach to the Gaussian Unitary Ensemble,''
arXiv:2206.04256 [math-ph].


\bibitem[Ise19a]{Ise19a}
R.~A.~Iseppi,
``The BV formalism: Theory and application to a matrix model,''
Rev.\ Math.\ Phys.\ \textbf{31}, 1950035 (2019)
[arXiv:1610.03463 [math-ph]].


\bibitem[Ise19b]{Ise19b}
R.~A.~Iseppi,
``The BRST cohomology and a generalized Lie algebra cohomology: Analysis of a matrix model,''
arXiv:1909.05053 [math-ph].


\bibitem[IvS17]{IvS17}
R.~A.~Iseppi and W.~D.~van Suijlekom,
``Noncommutative geometry and the BV formalism: Application to a matrix model,''
J.\ Geom.\ Phys.\ \textbf{120}, 129--141 (2017)
[arXiv:1604.00046 [math-ph]].


\bibitem[HKP22]{Khalkhali2}
H.~Hessam, M.~Khalkhali and N.~Pagliaroli,
``Bootstrapping Dirac ensembles,''
J.\ Phys.\ A \textbf{55}, no.\ 33, 335204 (2022)
[arXiv:2107.10333 [hep-th]].


\bibitem[HKPV22]{KhalkhaliReview}
H.~Hessam, M.~Khalkhali, N.~Pagliaroli and L.~Verhoeven,
``From Noncommutative Geometry to Random Matrix Theory,''
{\em to appear in J.\ Phys.\ A}
[arXiv:2204.14216 [hep-th]].


\bibitem[KP21]{Khalkhali3}
M.~Khalkhali and N.~Pagliaroli,
``Phase transition in random noncommutative geometries,''
J.\ Phys.\ A \textbf{54}, no.\ 3, 035202 (2021)
[arXiv:2006.02891 [math-ph]].


\bibitem[KP22]{Khalkhali4}
M.~Khalkhali and N.~Pagliaroli,
``Spectral statistics of Dirac ensembles,''
J.\ Math.\ Phys.\ \textbf{63}, 053504 (2022)
[arXiv:2109.12741 [hep-th]].


\bibitem[Kra98]{Krajewski}
T.~Krajewski,
``Classification of finite spectral triples,''
J.\ Geom.\ Phys.\ \textbf{28}, 1--30 (1998)
[arXiv:hep-th/9701081].


\bibitem[NSS21]{NSSfuzzy}
H.~Nguyen, A.~Schenkel and R.~J.~Szabo,
``Batalin-Vilkovisky quantization of fuzzy field theories,''
Lett.\ Math.\ Phys.\ \textbf{111}, 149 (2021)
[arXiv:2107.02532 [hep-th]].


\bibitem[PS19]{Perez-Sanchez1}
C.~I.~Perez-Sanchez,
``Computing the spectral action for fuzzy geometries: 
from random noncommutative geometry to bi-tracial multimatrix models,''
arXiv:1912.13288 [math-ph].


\bibitem[PS21]{Perez-Sanchez2}
C.~I.~Perez-Sanchez,
``On multimatrix models motivated by random noncommutative geometry I: 
The functional renormalization group as a flow in the free algebra,''
Annales Henri Poincar{\'e} \textbf{22}, no.\ 9, 3095--3148 (2021)
[arXiv:2007.10914 [math-ph]].


\bibitem[PS22]{Perez-Sanchez3}
C.~I.~Perez-Sanchez,
``On multimatrix models motivated by random noncommutative geometry II: 
A Yang-Mills-Higgs matrix model,''
Annales Henri Poincar{\'e} \textbf{23}, no.\ 6, 1979--2023 (2022)
[arXiv:2105.01025 [math-ph]].


\bibitem[vS15]{vanSuijlekom}
W.~D.~van Suijlekom,
{\it Noncommutative geometry and particle physics},
Mathematical Physics Studies, Springer Verlag, Dordrecht (2015).


\end{thebibliography}
\end{document}